 \definecolor{BLACK}{gray}{0}
 \definecolor{WHITE}{gray}{1}
 \definecolor{RED}{rgb}{1,0,0}
 \definecolor{GREEN}{rgb}{0,1,0}
 \definecolor{BLUE}{rgb}{0,0,1}
 \definecolor{CYAN}{cmyk}{1,0,0,0}
 \definecolor{MAGENTA}{cmyk}{0,1,0,0}
 \definecolor{YELLOW}{cmyk}{0,0,1,0}
\newtheorem*{lemma*}{Lemma}
\newtheorem{prop}{Proposition}
\newtheorem{definition}{Definition}
\theoremstyle{definition}
\newtheorem{example}{Example}
\DeclareMathOperator*{\Tr}{Tr}
\DeclareMathOperator*{\Rank}{Rank}
\DeclareMathOperator*{\Ran}{Ran}
\DeclareMathOperator*{\Span}{Span}
\DeclareMathOperator*{\Ker}{Ker}
\DeclareMathOperator*{\Real}{Re}
\DeclareMathOperator{\diag}{diag}
\DeclareMathOperator{\Prob}{Prob}
\DeclareRobustCommand{\openzero}{\leavevmode\hbox{0\kern-.55em0}}
\newcommand{\bs}{\boldsymbol}
\newcommand{\e}{\mathrm{e}}
\newcommand{\im}{\mathrm{i}}
\renewcommand\bra[1]{{\langle{#1}|}}
\renewcommand\ket[1]{{|{#1}\rangle}}
\begin{document}
\title{Coherence generating power of quantum dephasing processes}

\date{\today}

\author{Georgios Styliaris}
\email [e-mail address: ]{styliari@usc.edu}
\author{Lorenzo Campos Venuti}
\author{Paolo Zanardi}

\affiliation{Department of Physics and Astronomy, and Center for Quantum Information
Science \& Technology, University of Southern California, Los Angeles,
CA 90089-0484}

\begin{abstract}
We provide a quantification of the capability of various quantum dephasing processes to generate coherence out of incoherent states. The measures defined, admitting computable expressions for any finite Hilbert space dimension, are based on probabilistic averages and arise naturally from the viewpoint of coherence as a resource. We investigate how the capability of a dephasing process (e.g., a non-selective orthogonal measurement) to generate coherence depends on the relevant bases of the Hilbert space over which coherence is quantified and the dephasing process occurs, respectively.
We extend our analysis to include those Lindblad time evolutions which, in the infinite time limit, dephase the system under consideration and calculate their coherence generating power as a function of time. We further identify specific families of such time evolutions that, although dephasing, have optimal (over all quantum processes) coherence generating power for some intermediate time. Finally, we investigate the coherence generating capability of random dephasing channels.
\end{abstract}

\maketitle

\section{Introduction} \label{intro_section}

One of the main distinctive features of quantum theory is the \textit{superposition principle}. According to it, physical states of a quantum system can be expressed as linear combinations of other quantum states and different bases, usually associated with eigenstates of observables, yield different expansions. The presence of accessible relative phases between the different branches is known as \textit{quantum coherence} and gives rise to quantum interference phenomena, lying in the heart of theory \cite{bohm1951quantum}. Quantum coherence, besides being a integral part of the quantum theory, constitutes also an important ingredient, for example, in quantum metrology \cite{giovannetti2001quantum,giovannetti2004quantum,demkowicz2014using}, quantum computation \cite{nielsen2002quantum} and quantum error correction \cite{lidar2013quantum}, quantum thermodynamics \cite{brandao2013resource,gour2015resource,PhysRevX.5.021001} and quantum biological processes \cite{engel2007evidence,mohseni2008environment,huelga2013vibrations}.

On the other hand, important classes of dynamics in open quantum systems as well as various measurement processes lead to \textit{dephasing} of the system under consideration (see, e.g., \cite{breuer2002theory} and \cite{wiseman2009quantum}). Dephasing processes are linked to loss of information associated with the relative phases between the branches of the wavefunction. Nevertheless, dephasing of a quantum state does not necessarily imply total loss of its quantum coherence, since both dephasing and coherence are notions well-defined only with respect to specific bases in the Hilbert space which can, in general, be different.

%It is therefore natural to ask questions related to coherence generation/loss, as an outcome of a dephasing process (e.g., a non-selective orthogonal measurement).

\textit{The main purpose of this work is to quantify the capability of various dephasing processes to generate quantum coherence out of incoherent states}. We investigate how the efficiency of a dephasing process (e.g., a non-selective orthogonal measurement) to generate coherence depends on the associated bases in the Hilbert space and study its maximization. The situation of a dephasing process occurring over a random basis is also examined. We further consider quantum evolutions described by the Lindblad master equation which lead to dephasing of the system under consideration and examine how their capability to generate coherence varies as a function of time. Remarkably, we find that there exist time instances over which certain such dephasing evolutions can generate coherence as well as the optimal unitary processes.

%For these purposes we adopt the perspective of quantum coherence as a \textit{resource theory} (first established in \cite{aberg2006quantifying,baumgratz2014quantifying}) which allows for a precise definition of the notion of coherence, as well as for its quantification. However, as already mentioned
In this work we are interested in the \textit{capability} of (particular classes of) quantum operations to produce coherence, rather that just in the amount of coherence being present in quantum states. In general, no unique formulation in resource theories exists for this quantification and different approaches, encapsuling different aspects, exist (see, e.g., \cite{zanardi2000entangling,mani2015cohering,garcia2015note,bu2017cohering,dana2017resource,korzekwa2017coherifying}). Here, we adopt the relevant definition of \textit{coherence generating power} of quantum operations based on probabilistic averages \cite{coherence_1,coherence_2,zanardi2017quantum} (summarized in \autoref{CGP_section}).

This article is organized as follows. In \autoref{definitions_section} we give the preliminary mathematical definitions and set the notation. In \autoref{resource_theories_section} we recall basic aspects of the resource theory of quantum coherence while in \autoref{CGP_section} we present the main definitions of the \textit{coherence generating power} formalism (following \cite{coherence_1,coherence_2}), which is a way of extending the quantification of coherence from states to quantum operations, and expand on them. The main body of the article is \autoref{main_section}, where we first distinguish between different classes of dephasing processes (\autoref{maximal_partial_definitions_section}) and then we quantify the capability of such processes to generate coherence (\autoref{maximal_section} and \autoref{partial_section}). In \autoref{Lindblad_section} we investigate dephasing time evolutions obeying the Lindblad master equation and study the coherence generating power of those evolutions as a function of time (\autoref{CGP_Lindblad_section}) as well as specific families of those evolutions attaining optimal coherence generating power (\autoref{Lindblad_maximal_section}). Finally, in \autoref{random_section} we study the coherence generating capability of random dephasing processes. In  \autoref{conclusions_section} we conclude.

\section{Setting the stage}

\subsection{Basic definitions} \label{definitions_section}

Let $ \{\ket{i}\}_{i=1}^d$ be an orthonormal basis of the Hilbert space ${\mathcal H}\cong{\mathbb{C}}^d$ and $\mathbf \{ P_i \coloneqq \ket{i}\bra{i} \}_{i=1}^d$ be the associated family of rank-1 orthogonal projectors. We consider the operator space $ \mathcal B (\mathcal H)$ over $\mathcal H$ as a Hilbert space equipped with the Hilbert-Schmidt scalar product $\braket{X,Y} \coloneqq \Tr (X^\dagger Y)$ and norm $\|X\|_2:=\sqrt{\braket{X,X}}$. The (Shatten) 1-norm of operator $X$ is defined as $\left\| X \right\|_1 \coloneqq \Tr \left( \sqrt{X^\dagger X} \right) = \sum_{i=1}^d s_i$ (where $\{s_i\}_{i=1}^d$ are the singular values of $X$) while $\| X \|_\infty $ denotes the operator (spectral) norm, i.e.,  $\| X \|_\infty \coloneqq \max_i \left( s_i \right)$.

The above construction can be extended to the superoperator space $ \mathcal B \left( \mathcal B\left( \mathcal H \right)  \right)$, which can be similarly equipped with a (Hilbert-Schmidt over the Hilbert space $\mathcal B\left( \mathcal H \right)$) scalar product $\braket{\mathcal X,\mathcal Y} \coloneqq \Tr (\mathcal X^\dagger \mathcal Y)$ and a 2-norm $\|\mathcal X\|_2 \coloneqq \sqrt{\braket{\mathcal X,\mathcal X}}$ (where $\mathcal X , \mathcal Y \in \mathcal B \left( \mathcal B \left( \mathcal H \right)  \right)\,$) \footnote{Notice that we use the same symbol $|| ( \cdot )|| _{2}$ both for the operator 2-norm and the superoperator 2-norm. Which definition is applicable should be clear from the context.}.
The 1-1 induced norm is denoted as $\left\| \mathcal X \right\|_{1,1} \coloneqq \sup_{\left\| A \right\|_1 =1} \left( \left\| \mathcal X A \right\|_1 \right)$. The 1-1 norm is unstable under tensorization \cite{kitaev2002classical}, i.e., in general $\left\| \mathcal X \otimes \mathcal Y \right\|_{1,1} \ne \left\| \mathcal X \right\|_{1,1} \left\| \mathcal Y \right\|_{1,1}$. Nevertheless the diamond norm, which can be defined as $\left\| \mathcal X \right\|_\diamondsuit \coloneqq  \left\| \mathcal X \otimes I_d  \right\|_{1,1}$ \cite{kitaev1997quantum}, satisfies $\left\| \mathcal X \otimes \mathcal Y \right\|_{\diamondsuit} = \left\| \mathcal X  \right\|_{\diamondsuit}  \left\| \mathcal Y \right\|_{\diamondsuit} $ ($I_d$ above denotes the identity superoperator over $\mathcal H \cong {\mathbb{C}}^d$).
We define as physically valid quantum operations $\mathcal E$ over the set of density operators $\mathcal S (\mathcal H) \subset \mathcal B(\mathcal H)$ all the linear, Completely Positive (CP) and Trace Preserving (TP) maps $\mathcal E: \mathcal S(\mathcal H) \to \mathcal S(\mathcal H)$.

Given a complete set of orthogonal (not necessarily rank-1) projectors $B=\{ \Pi_i\}_i$ (i.e., $\Pi_i = \Pi_i^\dagger$, $\Pi_i \Pi_j =  \Pi_i \, \delta_{ij}$, $\sum_i \Pi_i = I$) we define the \textit{$B$-dephasing superoperator} as
\begin{gather}
\mathcal D_B (\cdot) \coloneqq \sum_i \Pi_i (\cdot) \Pi_i \,\;,
\end{gather}
which is an orthogonal projector over $\mathcal B(\mathcal H)$. The complementary projector is denoted $\mathcal Q_B \coloneqq I - \mathcal D_B$. Every orthonormal basis $ \{\ket{i}\}_{i=1}^d$ of $\mathcal H$ has an associated dephasing superoperator $\mathcal D_B$, where $B = \{ P_i \}_{i=1}^d$. The range of a linear operator $X$ is denoted as $\Ran (X)$.

For a $d$-dimensional probability vector $\bs p$ (i.e., $p_i \ge 0 $, $\sum_{i=1}^d p_i = 1$) we denote its \textit{Shannon entropy} as $H (\bs p) \coloneqq - \sum_{i=1}^d  p_i \ln(p_i)$. We use $S(\rho) \coloneqq - \Tr \left( \rho \ln \rho \right)$ for the \textit{von Neumann entropy} of $\rho \in \mathcal S(\mathcal H)$.

\subsection{Resource theory of quantum coherence}  \label{resource_theories_section}

%The main idea behind the quantum resource theory perspective is simple: it is assumed that
%\begin{inparaenum}[(i)]
%\item the set of possible quantum evolutions and
%\item the set of available states of the quantum system under consideration
%\end{inparaenum}
%are restricted to a subset of the physically implementable evolutions and states, respectively. The restricted set of quantum operations (states) is called \textit{free operations} (\textit{free states}) and,
%, reflecting the assumption that they can be implemented (prepared) without any ``cost''.

The main idea behind quantum resource theories is simple: a subset of the physical states of the quantum system under consideration is distinguished (called \textit{free states}), as well as a subset of its possible quantum evolutions (called \textit{free operations}). The free states contain no resource (by definition) and, for the resource theory to be consistent, free operations should map free states to free states.
%Then, numerous questions can be asked regarding the interconversion possibilities of non-free states (called \textit{resource states}) under free operations.
%The intuition behind (the non-unique in general) resource quantification is that the ``further from a free state'' a resource state can be converted under free operations, the ``more resource it contains''.
Any proper resource quantifier is a function mapping quantum states to non-negative real numbers with the properties that
\begin{inparaenum}[(i)]
\item the amount of resource of a free state vanishes, and
\item the amount of resource contained in any state cannot increase under the action of free operations.
\end{inparaenum}
Such functions are hence also called \textit{resource monotones}.
%Consider, for example, the resource theory for entanglement. There, quantum operations for the, e.g., bipartite system under consideration are restricted to what is known as LOCC \cite{horodecki2009quantum} while free states are the separable states. Indeed, in that setting separable states remain separable under free operations while any entangled state constitutes a resource.

%The resource theory of coherence, like any other quantum resource theory, is specified by the set of \textit{free states} and the set of \textit{free operations}.
In the resource theory of coherence the set of free states $I_B$, which are called \textit{incoherent states}, is defined (with respect to a basis) as the image (over quantum states) of the associated $B$-dephasing superoperator: $I_B = \Ran \left( \mathcal D_B \right) $ where $B = \{ P_i\}_{i=1}^d$, i.e., a state $\rho$ is incoherent if and only if
%$\rho \in I_B$ if and only if $\mathcal D_B \left(\rho \right) = \rho$, or equivalently,
$\rho = \sum_i p_i P_i$ with $\{p_i\}_{i=1}^d$ any probability distribution.

The set of free operations $\mathcal I_B$  has to be compatible with the set of free states, by ensuring no resource can be generated by the action of free operation on free states, i.e., if $\mathcal W \in \mathcal I_B$ is free then $\mathcal W (\rho) \in I_B$ for any $\rho \in I_B$. This is the minimal requirement of the theory for consistency and gives raise to the largest class of free operations, known as Maximally Incoherent Operations (MIO) \cite{aberg2006quantifying} which are quantum operations $\mathcal W$ such that $\mathcal W(I_B) \subseteq I_B$. Several alternative subclasses of free operations have been defined and investigated (see, e.g., \cite{streltsov2016quantum}), here we mention just a few:
\begin{itemize}
  \item  The subclass of \textit{Incoherent Operators} (IO) \cite{baumgratz2014quantifying} consists of the CPTP maps admitting a set of Kraus operators \cite{preskill1998lecture} $\{K_n\}_n$ such that, for all $\rho \in I_B$, $K_n \rho K_n^\dagger / \Tr(K_n \rho K_n^\dagger) \in I_B$.
  \item The subclass of \textit{Dephasing-covariant Incoherent Operators} (DIO) \cite{chitambar2016comparison,Marvian2016Speakable} contains the operators $\mathcal W$ such that $\left[ \mathcal W , \mathcal D_B \right] = 0 $.
  \item The subclass of Strictly Incoherent Operations (SIO) \cite{winter2016operational} contains the operators in IO that in addition fulfill $K_n^\dagger \rho K_n / \Tr(K_n^\dagger \rho K_n) \in I_B$ for any $\rho \in I_B$.
  \item Finally, Genuinely Incoherent Operators (GIO) \cite{Streltsov2016genuine} are the operations $\mathcal W$ that leave all incoherent states invariant, i.e., for any $\rho \in I_B$ it holds that $\mathcal W (\rho) = \rho$. From the definition it immediately follows that all GIO are in addition \textit{unital}.
\end{itemize}

%Monotones constitute an important class of functionals in any quantum resource theory.
A functional $c_B: \mathcal S(\mathcal H) \to \mathbb R _0^+$ is a coherence monotone if the following two properties hold: \begin{inparaenum}[(i)]
\item $c_B(\rho) = 0$ for all $\rho \in I _B$, and
\item $c_B(\mathcal W \rho) \le c(\rho)$ for all $\rho \in \mathcal S(\mathcal H)$ and $\mathcal W \in \mathcal I _B$.
\end{inparaenum}
Clearly, whether such a functional is a monotone or not depends on the set of free operations, e.g., a monotone of a subclass of MIO is not necessarily a monotone for MIO. The converse, however, is true: MIO monotones are monotones for all the possible subclasses. Monotones impose necessary conditions for interconversion of states under free operations, since $c_B(\rho_1) < c_B(\rho_2)$ suggests that it is impossible to convert $\rho_1 \rightarrow \rho_2$ under free operations. Monotones, therefore, \textit{quantify} how much resource a state contains -- this amount cannot increase under free operations.

\subsection{Coherence generating power of quantum operations} \label{CGP_section}

We begin by reviewing the definition of the \textit{Coherence Generating Power} (CGP) of a quantum operation,
as it was introduced in Ref.~\cite{coherence_1}. The idea pursued there, which allows transitioning from quantifying the amount of coherence in a quantum state $\rho$ to quantifying the ability of a quantum operation $\mathcal E$ to generate coherence, is simple: one imagines $\mathcal E$ acting on $\rho_{in} = \sum_i p_i P_i$, the latter being chosen at random from a uniform ensemble of ``input'' states, all of which are incoherent with respect to $B = \{ P_i \}_{i=1}^d$. Then one averages the amount of coherence contained in the processed state $\mathcal E (\rho_{in})$ over the ensemble of input states (i.e. $\{ p_i\}_{i=1}^d$ are treated as random variables), obtaining a quantifier characterizing $\mathcal E$. Clearly, this quantification of the ability of the quantum channel to generate coherence depends on the choice of the measure of (state) coherence $c_B$. The choice of the coherence monotone $c_B$ is far from unique and different choices are possible, depending on the set of free operations.

This approach is encapsuled in the following definition.
\begin{definition}
  The Coherence Generating Power (CGP) $C_B: \mathcal{E} \mapsto C_B(\mathcal E) \in  \mathbb R^+_0$ of a quantum operation $\mathcal E$ with respect to $B = \{ P_i \}_{i=1}^{d}$ and coherence measure $c_B$ is defined as
\begin{gather}
  C_B(\mathcal E) \coloneqq \int d \mu_{unif}(\bs p) \, c_B \left[ \mathcal E \left( \textstyle {\sum_i} p_i P_i \right) \right] , \label{CGP_formula}
\end{gather}
where $d \mu_{unif} (\bs p) \coloneqq \frac{1}{(d-1)!} \, \delta\left(\sum_i p_i -1\right) \prod_i d\!p_i$ is the uniform measure in the $(d-1)$-dimensional simplex.
\end{definition}

The $(d-1)$-dimensional simplex is the space of all possible d-tuples $\bs p = (p_1,\dotsc,p_d)$ with $p_i \ge 0$ and $\sum_i p_i = 1$, the points of which are in one to one correspondence with the diagonal elements of the incoherent input states $\rho_{in} = \sum_{i=1}^d p_i P_i$ (assuming a fixed $B$ with respect to which all input states are diagonal).

Before proceeding further by specifying the coherence measure $c_B$, we provide an alternative interpretation for the meaning of the CGP of a unitary quantum map $C_B\left( \mathcal U \right)$. Suppose we are interested in the following question: given a random pure state $\ket{\psi} \bra{\psi} \in \mathcal S(\mathcal H)$, what is the average coherence $c_B$ of the dephased quantum state $\mathcal D_{B'} \left(  \ket{\psi} \bra{\psi} \right)$, for some fixed bases $B$ and $B'$? For what follows, we assume the random pure states are distributed according to the Haar measure. Related matters were investigated in \cite{singh2016average,zhang2017average}.

As we will show now, the average coherence present after the dephasing of a random pure states is nothing else than the CGP of a corresponding unitary operator connecting the bases $B$ and $B'$.
\begin{prop}
  Let $B = \{ \ket{i} \bra{i}\}_{i=1}^d$ and $B' = \{ \ket{i'} \bra{i'} \}_{i=1}^d$ be complete families of rank-1 orthogonal projectors and $U \in U(d)$ be a unitary operator such that $\ket{i'} = U \ket{i}$ for all $i=1,\dotsc,d$. Then
  \begin{gather}
    \int d \mu_{Haar} (\psi) \, c_B \left( \mathcal D_{B'}   \ket{\psi} \bra{\psi}  \right) = C_B(\mathcal U) \,\;,
  \end{gather}
  where $\mathcal U (\cdot) = U (\cdot) U^\dagger$.
\end{prop}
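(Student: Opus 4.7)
\emph{Proof plan.} The plan is to rewrite the $B'$-dephasing of a pure state as $\mathcal{U}$ acting on a $B$-incoherent state whose probability weights are the squared overlaps of $\ket{\psi}$ with $B'$, and then to identify the pushforward of the Haar measure on pure states (along the map to the simplex) with $\mu_{unif}$.

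First, I would expand
\begin{gather*}
\mathcal{D}_{B'}\ket{\psi}\bra{\psi} = \sum_i |\braket{i'|\psi}|^2 \, \ket{i'}\bra{i'}
\end{gather*}
and set $p_i(\psi) \coloneqq |\braket{i'|\psi}|^2$. Clearly $p_i(\psi) \ge 0$ and $\sum_i p_i(\psi) = 1$, so $\bs p(\psi)$ lies in the $(d-1)$-simplex. Using $\ket{i'} = U\ket{i}$,
\begin{gather*}
\mathcal{D}_{B'}\ket{\psi}\bra{\psi} = U\Big(\sum_i p_i(\psi)\, P_i\Big) U^\dagger = \mathcal{U}\Big(\sum_i p_i(\psi)\, P_i\Big),
\end{gather*}
so the integrand on the left-hand side equals $c_B\big[\mathcal{U}(\sum_i p_i(\psi)\, P_i)\big]$. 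This is precisely the CGP integrand from the definition above, but parametrized by $\psi$ rather than directly by $\bs p$.

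The core remaining step is a change of variables: I would show that the pushforward of $d\mu_{Haar}(\psi)$ under the map $\psi \mapsto \bs p(\psi)$ coincides with $d\mu_{unif}(\bs p)$ on the simplex. By unitary invariance of the Haar measure, $U^\dagger\ket{\psi}$ is again Haar-distributed, so this reduces to showing that for a Haar-random $\ket{\phi}\in\mathcal{H}$ the squared overlaps $(|\braket{1|\phi}|^2,\dotsc,|\braket{d|\phi}|^2)$ are distributed uniformly on the simplex. This is a standard fact, which I would derive by realizing $\ket{\phi}$ as $(Z_1,\dotsc,Z_d)/\|Z\|$ with $Z_i$ i.i.d.\ standard complex Gaussians: each $|Z_i|^2$ is $\mathrm{Exp}(1)$, and the normalized vector of i.i.d.\ exponentials is known to follow $\mathrm{Dirichlet}(1,\dotsc,1)$, which coincides with the uniform measure on the $(d-1)$-simplex.

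The only nontrivial ingredient is this pushforward identity; the remainder is a direct substitution. Combining the two steps, the left-hand side equals $\int d\mu_{unif}(\bs p)\, c_B\big[\mathcal{U}(\sum_i p_i P_i)\big]$, which is $C_B(\mathcal{U})$ by definition.
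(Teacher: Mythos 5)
Your proof is correct and follows essentially the same route as the paper's: both rest on rewriting $\mathcal D_{B'}\ket{\psi}\bra{\psi}$ as $\mathcal U$ applied to a $B$-diagonal state with weights $p_i(\psi)=|\braket{i'|\psi}|^2$ (equivalently, $\mathcal D_{B'}=\mathcal U \mathcal D_B \mathcal U^\dagger$ plus unitary invariance of the Haar measure), together with the fact that the squared overlaps of a Haar-random pure state with a fixed basis are uniformly distributed on the simplex. The only difference is that the paper imports this last fact as a cited lemma (the Haar-average reformulation of the CGP), whereas you prove it from scratch via the Gaussian/Dirichlet representation, which makes your argument self-contained but identical in substance.
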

\begin{proof}
  As it was shown in \cite{coherence_1}, one can equivalently treat the input states of the CGP definition Eq.~\eqref{CGP_formula} as Haar distributed pure states that are dephased in $B$, i.e.,
  \begin{gather}
  C_B(\mathcal E) = \int d \mu_{Haar}(\psi) \,  c_B \left[ \mathcal E \mathcal D_B \left( \ket{\psi} \bra{\psi} \right) \right] \,\;.
  \end{gather}
  The result therefore follows from $\mathcal D_{B'} = \mathcal U \mathcal D_B \mathcal U^\dagger$ and the fact that the Haar measure is unitarily invariant.
\end{proof}
Clearly, the above result holds for all valid coherence measures $c_B$. Therefore, besides suggesting an alternative interpretation for the CGP of unitary operators, it further allows applying any already known results for the CGP of unitary operators, e.g., from \cite{coherence_1,ZhangToAppear}.

Now we specify the coherence measure $c_B$. We examine two possible choices: the monotone arising from the Hilbert-Schmidt 2-norm and the relative entropy of coherence.

%Properties and generalization of the CGP definition \eqref{CGP_formula} were investigated in \cite{coherence_2}. By the above definition and the monotonicity property of $c_B$ it follows immediately that $C_B(\mathcal W \mathcal E) \le   C_B(\mathcal E)$ for all $\mathcal W \in \mathcal I _B$. \textcolor{red}{(What else can I say about ``ordering'' the quantum channels? Does this CGP ordering imply any constraints for interconversion?) }

\subsubsection{Hilbert-Schmidt 2-norm based monotone}

The Hilbert-Schmidt operator 2-norm gives rise to the coherence measure
\begin{align}
  c_{2,B}(\rho) \coloneqq \min _{\sigma \in I_B} \| \rho - \sigma  \|_2^2 = \| \mathcal Q_B \rho  \|_2^2  = \sum_{i \ne j} | \rho_{ij} |^2 \,\;.
\end{align}
The functional $c_{2,B}$ is a coherence monotone in the (restrictive) class of GIO. More generally, it constitutes a monotone for any class of free operations if one restricts to \textit{unital} CPTP maps \cite{coherence_2} (such as dephasing processes considered later). Although specific to unital maps, the $c_{2,B}$ coherence quantifier allows for explicit computable formulas for the CGP in any finite Hilbert space dimension \cite{coherence_1}:
\begin{prop}
  Let $C_{2,B}(\mathcal E)$ be the coherence generating power (Eq.~\eqref{CGP_formula}) of the unital quantum channel $\mathcal E$ with coherence quantifier $c_B = c_{2,B}$. Then,
  \begin{enumerate}[(i)]
    \item \begin{align}
             C_{2,B}(\mathcal E) &= \frac{1}{d(d+1)}\sum_i \Bigl( \braket{\mathcal E  P_i , \mathcal E P_i} - \braket{\mathcal D_B \mathcal E  P_i , \mathcal D_B \mathcal E P_i}  \Bigr) \label{CGP_unitals}
          \end{align}
%    \item \begin{align}
%            C_{2,B}(\mathcal E) = \frac{1}{d+1} \mathbf E_{extr}\left[ \Pur(\mathcal E \rho_{in}) - \Pur(\mathcal D_B \mathcal E \rho_{in})  \right] \label{purity}
%          \end{align}
%          where $\Pur(\rho) \coloneqq \Tr(\rho^2)$ is the purity and $\mathbf E_{extr}$ denotes uniformly averaging over the ($d$ in number) pure states $\rho_{in}$ that are $B$-incoherent.
    \item If $\mathcal E$ is, in addition, normal ($[ \mathcal E , \mathcal E ^\dagger ] = 0$), then
          \begin{align}
             C_{2,B}(\mathcal E) = \frac{1}{2d(d+1)} \left\| \, \left[ \mathcal E,\mathcal D_B \right] \, \right\|_2^2  \label{CGP_commutator}
          \end{align}
  \end{enumerate}
\end{prop}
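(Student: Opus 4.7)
The plan is to expand the definition of $C_{2,B}(\mathcal E)$ directly. Writing $c_{2,B}(\sigma) = \|\mathcal Q_B \sigma\|_2^2$ and exploiting linearity of $\mathcal E$, one has $c_{2,B}[\mathcal E(\sum_i p_i P_i)] = \sum_{i,j} p_i p_j \braket{\mathcal Q_B \mathcal E P_i, \mathcal Q_B \mathcal E P_j}$. The uniform average is then controlled by the Dirichlet-$(1,\ldots,1)$ moments $\int d\mu_{unif}(\bs p)\, p_i p_j = (1+\delta_{ij})/[d(d+1)]$, which splits the result into the unrestricted sum $\sum_{i,j}\braket{\mathcal Q_B \mathcal E P_i, \mathcal Q_B \mathcal E P_j} = \|\mathcal Q_B \mathcal E I\|_2^2$ and the diagonal contribution $\sum_i \|\mathcal Q_B \mathcal E P_i\|_2^2$. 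The former vanishes by unitality ($\mathcal E I = I$) together with the fact that the identity is incoherent ($\mathcal Q_B I = 0$). Expanding $\mathcal Q_B = I - \mathcal D_B$ in the latter, and using $\mathcal D_B^\dagger = \mathcal D_B = \mathcal D_B^2$, yields part (i).

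For part (ii), the strategy is to recast the expression from part (i) as a superoperator trace and compare it with the expansion of $\|[\mathcal E,\mathcal D_B]\|_2^2$. Since $\{P_i\}$ is an orthonormal basis of $\Ran(\mathcal D_B)$, completing it to an orthonormal operator basis of $\mathcal B(\mathcal H)$ shows $\sum_i \|\mathcal E P_i\|_2^2 = \Tr(\mathcal D_B \mathcal E^\dagger \mathcal E)$ and $\sum_i \|\mathcal D_B \mathcal E P_i\|_2^2 = \Tr(\mathcal D_B \mathcal E^\dagger \mathcal D_B \mathcal E)$, where the traces are over the Hilbert-Schmidt space $\mathcal B(\mathcal H)$. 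Thus $d(d+1)\,C_{2,B}(\mathcal E) = \Tr(\mathcal D_B \mathcal E^\dagger \mathcal E) - \Tr(\mathcal D_B \mathcal E^\dagger \mathcal D_B \mathcal E)$. Expanding $\|[\mathcal E,\mathcal D_B]\|_2^2$ produces four terms: the two ``diagonal'' ones reduce via $\mathcal D_B^2 = \mathcal D_B$ and cyclicity to $\Tr(\mathcal D_B \mathcal E^\dagger \mathcal E) + \Tr(\mathcal D_B \mathcal E \mathcal E^\dagger)$, while the two ``cross'' ones both equal $\Tr(\mathcal D_B \mathcal E^\dagger \mathcal D_B \mathcal E) = \sum_{i,j}|\braket{P_j, \mathcal E P_i}|^2$, which is manifestly real and non-negative.

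The final ingredient is the normality assumption $[\mathcal E,\mathcal E^\dagger]=0$, needed precisely to identify the two diagonal traces $\Tr(\mathcal D_B \mathcal E^\dagger \mathcal E) = \Tr(\mathcal D_B \mathcal E \mathcal E^\dagger)$; combining everything then gives $\|[\mathcal E,\mathcal D_B]\|_2^2 = 2 d(d+1)\,C_{2,B}(\mathcal E)$. The main obstacle I expect is the careful bookkeeping at the superoperator level---distinguishing the Hilbert-Schmidt adjoint $\mathcal E^\dagger$ from Hermitian conjugation of operators, tracking where $\mathcal D_B^2$ can be collapsed to $\mathcal D_B$ inside the superoperator trace, and pinpointing that normality enters only in part (ii), whereas part (i) rests solely on unitality and the orthogonal-projector structure of $\mathcal D_B$.
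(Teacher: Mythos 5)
Your proposal is correct and follows essentially the same route as the paper: part (i) via the second moments $\int d\mu_{unif}\,p_ip_j=(1+\delta_{ij})/[d(d+1)]$ with the off-diagonal piece killed by unitality and $\mathcal Q_B I=0$, and part (ii) by expanding $\left\|\left[\mathcal E,\mathcal D_B\right]\right\|_2^2$ into superoperator traces, evaluating them in a basis adapted to $\mathcal D_B$, and using normality only to equate $\Tr(\mathcal D_B\mathcal E^\dagger\mathcal E)$ with $\Tr(\mathcal D_B\mathcal E\mathcal E^\dagger)$. No gaps; your bookkeeping of the Hilbert--Schmidt adjoint and of where $\mathcal D_B^2=\mathcal D_B$ is used matches the paper's argument.
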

\begin{proof}
    \textbf{(i)} Equation \eqref{CGP_formula} can be equivalently written as
    \begin{align*}
    &C_{2,B}(\mathcal E) = \int d \mu_{unif} (\bs p)\,  \left[ \braket{\mathcal Q_B \mathcal E \textstyle \sum_i \displaystyle p_i P_i , \mathcal Q_B \mathcal E \textstyle \sum_j \displaystyle p_j P_j} \right] \\
    &= \sum_{i,j} \int d \mu_{unif} (\bs p)\, \left[p_i p_j\right] \braket{\mathcal Q_B \mathcal E  P_i , \mathcal Q_B \mathcal E P_j} \\
    &= \sum_{i,j} \int d \mu_{unif} (\bs p)\, \left[p_i p_j\right] \Big( \braket{\mathcal E  P_i , \mathcal E P_j}  - \braket{\mathcal D_B \mathcal E  P_i , \mathcal D_B \mathcal E P_j} \Big) \,\;,
    \end{align*}
    where the last equality was obtained using the definition $\mathcal Q_B = I - \mathcal D_B$ and the fact that $\mathcal D_B$  is a hermitian orthogonal projector (and therefore idempotent).
    Assuming that the quantum processes is unital ($\mathcal E (I) = I$) and using the fact that for uniform input ensemble of states $\int d \mu_{unif} (\bs p)\, \left[p_i p_j\right] = [d(d+1)]^{-1}(1+\delta_{ij})$ (see, e.g., \cite{coherence_2} for a derivation) the result follows. \\ \\
%    \begin{align}
%      C_{2,B}(\mathcal E) &= \int d \mu_{unif} (\bs p)\,  \left[ \braket{\mathcal Q_B \mathcal E \sum_i p_i P_i , \mathcal Q_B \mathcal E \sum_j p_j P_j} \right] \nonumber\\
%       & = \sum_{i,j} \int d \mu_{unif} (\bs p)\, \left[p_i p_j\right] \braket{\mathcal Q_B \mathcal E  P_i , \mathcal Q_B \mathcal E P_j} \nonumber \\
%       & = \sum_{i,j} \int d \mu_{unif} (\bs p)\, \left[p_i p_j\right] \left( \braket{\mathcal E  P_i , \mathcal E P_j} - \braket{\mathcal D_B \mathcal E  P_i , \mathcal D_B \mathcal E P_j} \right) \nonumber
%    \end{align}
%    \textbf{(ii)} Follows immediately from (i) by noticing that $\braket{\mathcal E \rho,\mathcal E\rho} = \Pur\left( \mathcal E \rho \right)$. \\
    \textbf{(ii)} We have
    \begin{gather*}
    \left\| \,\left[ \mathcal E,\mathcal D_B \right] \, \right\|_2^2 = \Tr\left( \mathcal E ^\dagger \mathcal E \mathcal D_B \right) + \Tr\left( \mathcal E \mathcal E^\dagger  \mathcal D_B \right) -2 \Tr \left( \mathcal D_B \mathcal E ^\dagger \mathcal D_B \mathcal E \right) \,.
    \end{gather*}
    From the normality assumption it follows that the first two terms are equal. The superoperator traces can then be evaluated using the Hilbert-Schmidt operator inner product as $\Tr \left( \mathcal X \right) = \sum_{i,j} \braket{ \,\ket{i}\bra{j} , \mathcal X \left( \ket{i} \bra{j} \right) }$ which yields
    $ \Tr\left( \mathcal E ^\dagger \mathcal E \mathcal D_B \right)= \sum_i \braket{\mathcal E P_i ,\mathcal E P_i  }$. A similar calculation for the remaining term gives $\Tr \left( \mathcal D_B \mathcal E ^\dagger \mathcal D_B \mathcal E \right) = \sum_i \braket{\mathcal D_B\mathcal E P_i ,\mathcal D_B\mathcal E P_i}$ and hence the result follows.
\end{proof}

Let us now make a couple of remarks for CGP based on the Hilbert-Schmidt 2-norm.
%Eq.~\eqref{purity} expresses the 2-norm CGP of a unital quantum operation as a difference of (average) purities.
The original averaging definition for the CGP, Equation \eqref{CGP_formula}, surprisingly admits in this case the much simpler form given by Eq.~\eqref{CGP_commutator}. The last equation also implies that the 2-norm CGP for unital quantum channels defined originally is nothing more than \textit{a measure of the degree of non-commutativity between $\mathcal E$ and the dephasing channel $\mathcal D_B$} (see also \cite{zanardi2017quantum}).
%Furthermore, the 2-norm based CGP is related to the notion of purity [$\Pur(\rho) \coloneqq \Tr(\rho^2)$].
%(\cite{streltsov2016purity} pursues a connection on the resource theory level)
%By noticing that $\braket{\mathcal E \rho,\mathcal E\rho} = \Pur\left( \mathcal E \rho \right)$, we can immediately write $C_B(\mathcal E) = (d+1)^{-1} \int d \mu_{extr} (\bs p) \,\left[ \Pur(\mathcal E \rho_{in}) - \Pur(\mathcal D_B \mathcal E \rho_{in})  \right]$ where $d \mu_{extr} (\bs p)$ denotes uniformly averaging over the ($d$ in number) pure states $\rho_{in}$ that are $B$-incoherent~\footnote{
%Or, in other words, the states that correspond to the extremal points of the $(d-1)$-dimensional simplex in which the vector $\mathbf p = (p_1,\dots,p_d)$ belongs to, where we decomposed $\rho_{in} = \sum_i p_i P_i$.}.
%Therefore the 2-norm based CGP can be seen as a difference of (average) purities.

%Finally, the 2-norm CGP of an operator can be interpreted on purely geometrical grounds \cite{zanardi2017quantum}.

\subsubsection{Relative entropy based monotone}

A coherence monotone for MIO (and therefore all subclasses of free operations, see e.g.~\cite{chitambar2016comparison}) is obtained using relative entropy \cite{baumgratz2014quantifying}:
\begin{align}
  c_{r,B} (\rho) \coloneqq \min_{\sigma \in I_B} S(\rho \,\|\, \sigma)  =   S (\mathcal D_B \rho) - S(\rho) \label{relative_entropy}
\end{align}
Let $C_{r,B}(\mathcal E)$ be the CGP Eq.~\eqref{CGP_formula} of the quantum channel $\mathcal E$ with $c_B = c_{r,B}$. Then, from Eq.~\eqref{relative_entropy} it is immediate that
\begin{align}
  C_{r,B}(\mathcal E) = \int d \mu_{unif} (\bs p)\,  \left[S(\mathcal D_B \mathcal E \rho _{in}(\bs p))  - S(\mathcal E \rho _{in}(\bs p)) \right]  \,\;. \label{CGP_relative}
\end{align}
Under the class of IO the monotone $c_r(\rho)$ has an operational interpretation as the optimal rate of asymptotic coherence distillation, i.e., $ c_{r,B} (\rho) = \sup R $ such that $\rho^{\otimes n} \xmapsto{IO} \Phi_2^{\otimes n R} $ as $n\to \infty$, where $\Phi_2$ is the maximal coherence qubit state \cite{winter2016operational}.
The relative entropy CGP of $\mathcal E$ therefore admits an operational interpretation as the \textit{average rate of distillable coherence} (under IO) of the processed state $\mathcal E \left(\rho_{in} \right)$.

Zhang \textit{et al.} in \cite{ZhangToAppear} have obtained explicit expressions for the relative entropy CGP, Eq.~\eqref{CGP_relative}, when $\mathcal E$ is a unitary channel. In this article we obtain expressions for dephasing channels, as will be shown in \autoref{relative_section}.

\section{Coherence generating power of maximally and partially dephasing processes} \label{main_section}

\subsection{Maximally and partially dephasing processes} \label{maximal_partial_definitions_section}

We begin by distinguishing between two families of dephasing processes.
%We characterize a dephasing process as \textit{maximal} when there exists an orthonormal basis such that all output states of the channel are diagonal. If the output states are block diagonal (with non-trivial blocks) we call the dephasing \textit{partial}.
%\begin{definition}
%  A CPTP map $\mathcal D_{B}:  \mathcal S (\mathcal H) \to
%  \mathcal S(\mathcal H)$ is a \textbf{maximal} dephasing channel if there exists a complete family of rank-1 orthogonal projectors $B = \{ \Pi_i\}_i$ such that $\mathcal D_B (\cdot) = \sum_i \Pi_i (\cdot) \Pi_i$. If  instead $\max_{i} [\Rank (\Pi_i)]>1$ the map is a \textbf{partial} dephasing channel.
%\end{definition}
\begin{definition}
   We characterize a dephasing map $\mathcal D_B: \mathcal S (\mathcal H) \to \mathcal S(\mathcal H)$
   %$\mathcal D_{B}: \rho \mapsto \sum_i \Pi_i \rho \Pi_i$
   as a \textbf{maximally} dephasing channel iff $\Rank \left[ \Pi_i \right] = 1$ for all $\Pi_i \in B$. Otherwise the map is a \textbf{partially} dephasing channel.
\end{definition}
In other words, for a \textit{maximally} dephasing process there exists an orthonormal basis such that all output states of the channel are diagonal with respect to that basis. If the output states are block diagonal (with non-trivial blocks) we call the dephasing \textit{partial}. Notice, however, that the basis over which this is achieved is not uniquely specified: for example, any permutation or phase shift on the basis elements will still preserve the diagonal outcomes. Nevertheless, the complete set of projectors $B$ completely characterizes the (maximally or partially) dephasing channel:
%Note that in general, for a given for any given maximal (partial) dephasing channel there exist more that one orthonormal basis over which the output states are diagonal (block diagonal)
%For any maximal (partial) dephasing channel there exist more that one orthonormal basis over which the output states are diagonal (block diagonal), therefore such a basis does not completely characterize the channel. For example, any permutation or phase shift of the basis elements will still preserve the diagonal outcomes. The set of projectors $B$ however completely characterizes the (maximal or partial) dephasing channel:
\begin{prop} \label{unique_dephasing_label}
  The set of projectors $B$  corresponding to the (maximally or partially) dephasing channel $\mathcal D_B$ is unique.
\end{prop}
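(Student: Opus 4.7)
The plan is to reconstruct the set $B = \{\Pi_i\}_i$ from the superoperator $\mathcal{D}_B$ itself, thereby showing that distinct complete families of orthogonal projectors yield distinct dephasing channels. The key observation is that $\mathcal{D}_B$ is self-adjoint and idempotent on $\mathcal{B}(\mathcal{H})$ with respect to the Hilbert--Schmidt inner product, hence it is the orthogonal projector onto its range $\Ran(\mathcal{D}_B)$, and this range admits an intrinsic characterization as the commutant of $B$.

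First I would verify that $\Ran(\mathcal{D}_B) = \{Y \in \mathcal{B}(\mathcal{H}) : [Y,\Pi_i] = 0 \text{ for all } i\}$. For $\supseteq$, if $Y$ commutes with every $\Pi_i$ then $\mathcal{D}_B(Y) = \sum_i \Pi_i Y \Pi_i = Y \sum_i \Pi_i = Y$; for $\subseteq$, a direct check gives $\Pi_j \mathcal{D}_B(X) = \Pi_j X \Pi_j = \mathcal{D}_B(X)\,\Pi_j$ for each $j$. Consequently, if $\mathcal{D}_{B_1} = \mathcal{D}_{B_2}$ then the commutants of $B_1$ and $B_2$ agree as subsets of $\mathcal{B}(\mathcal{H})$, and invoking the finite-dimensional double-commutant theorem (equivalently, observing that the bicommutant of a complete family of mutually orthogonal projectors is $\Span\{\Pi_i\}_i$) one concludes that the unital abelian algebras $\mathcal{A}_{B_1}$ and $\mathcal{A}_{B_2}$ generated by the two families coincide; call this common algebra $\mathcal{A}$.

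The remaining step is to extract the individual projectors from $\mathcal{A}$. Every orthogonal projector $P \in \mathcal{A}$ admits an expansion $P = \sum_j c_j \Pi_j$; imposing $P = P^\dagger = P^2$ together with the orthogonality and linear independence of $\{\Pi_j\}_j$ forces $c_j \in \{0,1\}$, so $P = \sum_{j \in S}\Pi_j$ for some index subset $S$. Hence the atoms of the projector lattice of $\mathcal{A}$ are precisely the $\Pi_i$'s, and since these atoms are an intrinsic datum of $\mathcal{A}$ the sets $B_1$ and $B_2$ must coincide. The main step requiring attention is the double-commutant identification: while in finite dimensions this is elementary spectral linear algebra, it is the conceptual core of the argument, and everything else reduces to routine manipulations within a finite-dimensional commutative algebra.
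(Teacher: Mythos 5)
Your argument is correct, but it proceeds along a genuinely different route from the paper. You reconstruct $B$ intrinsically from the superoperator: you identify $\Ran(\mathcal D_B)$ with the commutant of $\{\Pi_i\}_i$ (both inclusions you check are right, since $\Pi_j \mathcal D_B(X)=\Pi_j X\Pi_j=\mathcal D_B(X)\Pi_j$ and $\mathcal D_B(Y)=Y$ whenever $[Y,\Pi_i]=0$), pass to the bicommutant to recover the abelian algebra $\Span\{\Pi_i\}_i$, and then read off the $\Pi_i$ as the atoms of its projector lattice, using $c_j^2=c_j$ under linear independence. Each step is sound in finite dimensions, and in fact you prove slightly more than asked: $B$ is determined already by the range (fixed-point algebra) of $\mathcal D_B$, not just by the full map. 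The paper instead argues channel-theoretically: it invokes the unitary freedom in Kraus decompositions, so $\Pi'_i=\sum_j U_{ij}\Pi_j$ for some unitary matrix $U$, and then uses the completeness relation $\sum_i\Pi'_i=I$ to force every column sum of $U$ to equal $1$, which together with unitarity makes $U$ a permutation matrix, leaving the set $B$ unchanged. The paper's proof is shorter and leans on a standard structural theorem about CPTP maps; yours avoids Kraus uniqueness entirely at the cost of the (elementary, finite-dimensional) double-commutant identification, and it yields the extra structural insight that the dephasing basis is encoded in the fixed-point algebra alone.
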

\begin{proof}
  Clearly, the set of projectors $B = \{ \Pi_i \}$ is a set of Kraus operators for the CPTP map $\mathcal D_B$. We need to show that if $B'=\{\Pi'_i\}_i$ is a complete family of orthogonal projectors  with $\mathcal D_B = \mathcal D_{B'}$, then $B=B'$. Indeed, the two Kraus decompositions describe the same channel iff there exists a unitary $U$ such that $\Pi'_i = \sum_j U_{ij} \Pi_j$ (see, e.g. \cite{wolf2012notes}). But $\sum_i \Pi'_i = I$ which is true only if $\sum_i U_{ij} = 1 \, \forall j$. On the other hand, $U$ is a unitary matrix so the columns form orthonormal vectors. The last two properties can hold together only when $U$ is a permutation matrix. But a permutation of the Kraus operator indices doesn't affect the set $B$, therefore $B=B'$.
\end{proof}

Dephasing processes can be viewed, for example, as non-selective orthogonal measurements. In the case of maximally dephasing the measured observable is non-degenerate and thus all projectors $B'=\{P'_i\}_{i=1}^d$ of $\mathcal D_{B'}$ are rank-1, corresponding to the distinct eigenvalues of the measured observable. On the other hand, in the case of a degenerate observable non-trivial subspaces $B'=\{\Pi'_i\}$ occur with $\Rank \left( \Pi_i \right)$ equal to the degeneracy of the $i$-th eigenvalue, i.e., the dephasing is partial.

\subsection{Coherence generating power of maximally dephasing channels} \label{maximal_section}

Let us begin with a simple remark. For any maximally dephasing process there exists a basis $\{ \ket{i}\}_{i=1}^d$ such that the ``coherences'' of the output state (i.e., elements $\bra{i} \mathcal D_B \left( \rho \right)  \ket{j}$ for $i \ne j$) vanish. Nevertheless, this does not necessarily imply an incoherent output state since no statement has been made regarding the reference basis with respect to which coherence is quantified.  Consider for example the qubit case, maximally dephasing $\mathcal D _{B'}$ where $B' = \{ \ket{+} \bra{+}, \ket{-} \bra{-} \}$, and coherence quantified with respect to $B = \{ \ket{0} \bra{0} , \ket{1} \bra{1} \}$. Then clearly $\Ran(\mathcal D_B) \ne \Ran(\mathcal D_{B'})$: not all $B'$ dephased states are incoherent in $B$.
%In other words, $\mathcal D_B \in \mathcal I_B$ but it is not true in general $\mathcal D_{B'} \in \mathcal I_B$ for $B \ne B'$ \textcolor{red} {(Maybe I should be more specific in which RT)}.

In what follows we expand on this simple observation, quantifying through Eq.~\eqref{CGP_formula} how ``valuable'' a maximally dephasing channel is at creating coherence or, in other words, we calculate how much resource a maximally dephasing channel generates on average after acting on incoherent states. The 2-norm and relative entropy coherence quantifiers are adopted, each relevant in a different class of free operations of the theory.

\subsubsection{Hilbert-Schmidt 2-norm coherence}

\begin{prop}[2-norm CGP of maximally dephasing] \label{2norm_CGP_prop}
  Let $B = \{ \ket{i} \bra{i}\}_{i=1}^d$ and $B' = \{ \ket{i'} \bra{i'} \}_{i=1}^d$ be complete families of rank-1 orthogonal projectors and $U \in U(d)$ be a unitary operator such that $\ket{i'} = U \ket{i}$ for all $i=1,\dotsc,d$. Then
  \begin{enumerate}[(i)]
    \item The 2-norm CGP of the maximally dephasing channel $\mathcal D_{B'}$ is given by
    \begin{gather}
      C_{2,B}(\mathcal D_{B'}) = \frac{1}{d(d+1)} \Tr\left[  X_U X_U^T \left( I - X_U X_U^T \right) \right]\label{dephasing_2norm_CGP}
    % - \Tr\left[\left(X_U X_U^T\right)^2\right] \right)
    \end{gather}
    where $X_U \in \mathbb R^{d \times d}$ is bistochastic with $\left(X_U\right)_{ij} = |\!\braket{i |U| j}\!|^2$.
    \item Alternatively, on the superoperator level,
          \begin{gather}
            C_{2,B}(\mathcal D_{B'}) = \frac{1}{2d(d+1)} \left\| \, \left[ \mathcal D_{B'},\mathcal D_{B} \right]  \, \right\|_2^2 \label{dephasing_commutator_CGP}
          \end{gather}
    \item
    \begin{gather}
    0 \le  C_{2,B}(\mathcal D_{B'}) \le C_{2,B}^{max} (d) \coloneqq \frac{d-1}{4d(d+1)} \,\; , \label{2-norm_CGP_bounds}
    \end{gather}
    where the lower bound is achieved iff $\left[ \mathcal D_B, \mathcal D_{B'} \right] = 0$.
  \end{enumerate}
\end{prop}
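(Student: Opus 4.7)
The plan is to prove the three statements in sequence, with part (i) by direct computation, part (ii) as an immediate corollary, and part (iii) via a short spectral argument for $X_U X_U^T$.

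For part (i), I would invoke Eq.~\eqref{CGP_unitals}, which applies because every dephasing superoperator is unital. The main computation is to evaluate the action of $\mathcal{D}_{B'}$ on the incoherent basis elements: $\mathcal{D}_{B'}(P_j) = \sum_i |\!\braket{i'|j}\!|^2 \, \ket{i'}\bra{i'} = \sum_i (X_U)_{ji}\, \ket{i'}\bra{i'}$, where I have used $|\!\braket{i'|j}\!|^2 = |\!\braket{j|U|i}\!|^2 = (X_U)_{ji}$. Orthonormality of $\{\ket{i'}\bra{i'}\}_i$ in the Hilbert--Schmidt product then yields $\sum_j \braket{\mathcal{D}_{B'} P_j, \mathcal{D}_{B'} P_j} = \Tr(X_U X_U^T)$. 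A parallel computation gives $\mathcal{D}_B \mathcal{D}_{B'}(P_j) = \sum_k (X_U X_U^T)_{jk}\, P_k$, whence $\sum_j \braket{\mathcal{D}_B\mathcal{D}_{B'} P_j, \mathcal{D}_B \mathcal{D}_{B'} P_j} = \Tr[(X_U X_U^T)^2]$. Substitution into Eq.~\eqref{CGP_unitals} produces Eq.~\eqref{dephasing_2norm_CGP}. Part (ii) then follows at once: $\mathcal{D}_{B'}$ is a Hermitian orthogonal projector on $\mathcal{B}(\mathcal{H})$, hence normal, so Eq.~\eqref{CGP_commutator} applies verbatim and gives Eq.~\eqref{dephasing_commutator_CGP}.

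For part (iii), the lower bound is transparent from Eq.~\eqref{dephasing_commutator_CGP}, which is manifestly a squared 2-norm; saturation occurs iff $[\mathcal{D}_B, \mathcal{D}_{B'}] = 0$. For the upper bound, set $M \coloneqq X_U X_U^T$. Since $X_U$ is bistochastic with non-negative entries (a consequence of the unitarity of $U$), $M$ is symmetric, positive semidefinite, and doubly stochastic. In particular $M \, \bs{1} = X_U X_U^T \bs{1} = \bs{1}$, so $\lambda_1 = 1$ is an eigenvalue, while a short Cauchy--Schwarz argument (or Birkhoff's theorem) shows that any row-stochastic matrix has operator norm at most one, placing the remaining eigenvalues of $M$ in $[0,1]$. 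Writing $\Tr[M(I-M)] = \sum_i \lambda_i (1-\lambda_i)$, the $\lambda_1 = 1$ contribution vanishes, and each of the remaining $d-1$ terms is at most $1/4$, giving the stated bound after dividing by $d(d+1)$.

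The main obstacle I anticipate is the upper bound in part (iii): recognizing that $X_U X_U^T$ inherits both the bistochastic structure of $X_U$ and a deterministic eigenvalue equal to one (the Perron eigenvalue), which is what ultimately fixes the scaling of the bound and distinguishes the CGP of maximally dephasing channels from the CGP of generic unital maps. Everything else is bookkeeping assembled from Eqs.~\eqref{CGP_unitals} and \eqref{CGP_commutator} combined with the unitarity of $U$.
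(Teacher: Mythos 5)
Your proposal is correct and follows essentially the same route as the paper: part (i) via Eq.~\eqref{CGP_unitals} reducing to $\Tr\left(X_U X_U^T\right)$ and $\Tr\left[\left(X_U X_U^T\right)^2\right]$, part (ii) from hermiticity/normality of $\mathcal D_{B'}$ and Eq.~\eqref{CGP_commutator}, and part (iii) from the pinned unit eigenvalue of $X_U X_U^T$ together with $\lambda(1-\lambda)\le 1/4$ on $[0,1]$. The only nitpick is your aside that ``any row-stochastic matrix has operator norm at most one,'' which is false for the spectral norm in general; what you actually need (and have) is that the symmetric, positive semidefinite, doubly stochastic matrix $M=X_U X_U^T$ has spectral radius bounded by its maximal row sum (or is a convex combination of permutations by Birkhoff), placing its spectrum in $[0,1]$.
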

\begin{proof}
    \textbf{(i)} First we notice that $\mathcal D_{B'} = \mathcal U \mathcal D_B \mathcal U^\dagger$, where $\mathcal U (\cdot) = U (\cdot) U^\dagger$.
    %is the corresponding superoperator of the operator $U$.
    Next we calculate the quantities $\sum_i \braket{\mathcal D_{B'} P_i,\mathcal D_{B'} P_i}$ and $\sum_i \braket{\mathcal D_{B} \mathcal D_{B'} P_i, \mathcal D_{B} \mathcal D_{B'} P_i}$, appearing in  equation \eqref{CGP_unitals}. A straightforward calculation gives
    \begin{gather*}
    \sum _i \braket{\mathcal D_{B'} P_i,\mathcal D_{B'} P_i} = \sum_i \Tr \left( P_i \mathcal U \mathcal D_B \mathcal U^\dagger P_i \right)  = \Tr \left[  X_U X_U^T \right] \,\;.
    \end{gather*}
    A similar calculation for the other term gives
    \begin{align*}
    &\sum_i \braket{\mathcal D_{B} \mathcal D_{B'} P_i, \mathcal D_{B} \mathcal D_{B'} P_i} = \sum_i \Tr \left( P_i \mathcal D_{B'} \mathcal D_{B} \mathcal D_{B'}  P_i \right)\\
    = & \sum_i \Tr \left( P_i \mathcal U \mathcal D_{B}  \mathcal U^\dagger  \mathcal D_{B}  \mathcal U  \mathcal D_{B} \mathcal U^\dagger P_i\right) = \Tr\left[\left(X_U X_U^T\right)^2\right] \,\;.
    \end{align*}
    The result for $C_{2,B}\left( \mathcal D_{B'} \right)$ follows. The bistochasticity of the matrix $\left(X_U\right)_{ij}$ is a direct consequence of the unitarity of $U$.\\ \\
    \textbf{(ii)} We have $\mathcal D_B = \mathcal D_{B} ^\dagger$, i.e., the maximally dephasing channels are hermitian therefore normal, and also unital. The claim hence follows by setting $\mathcal E =  \mathcal D_{B'}$ in equation \eqref{CGP_commutator}. \\
    \textbf{(iii)} The lower bound properties follow immediately from equation \eqref{dephasing_commutator_CGP}. For the upper bound observe that the matrix $X_U X_U^T$ is positive semi-definite and also bistochastic (as product of bistochastic matrices). Now, since bistochastic matrices have at least one eigenvalue equal to one, the difference $\left( \Tr\left[  X_U X_U^T \right] - \Tr\left[\left(X_U X_U^T\right)^2\right] \right)$ is bounded from above by $(d-1)[1/2 - (1/2)^2] = (d-1)/4$. The numerical factor of $1/2$ corresponds to the ($(d-1)$-fold degenerate) eigenvalue of $X_U X_U^T$ which maximizes the difference (since $0 \le \lambda \le 1$ for all eigenvalues). Notice that it is not \textit{a priori} guaranteed that a unitary matrix $U$ (corresponding to such an $X_U$) exists. We tackle this question in \autoref{appendix_maximum} of the Appendix.
\end{proof}

\begin{figure}[t]
  \centering
  \includegraphics[width= \columnwidth]{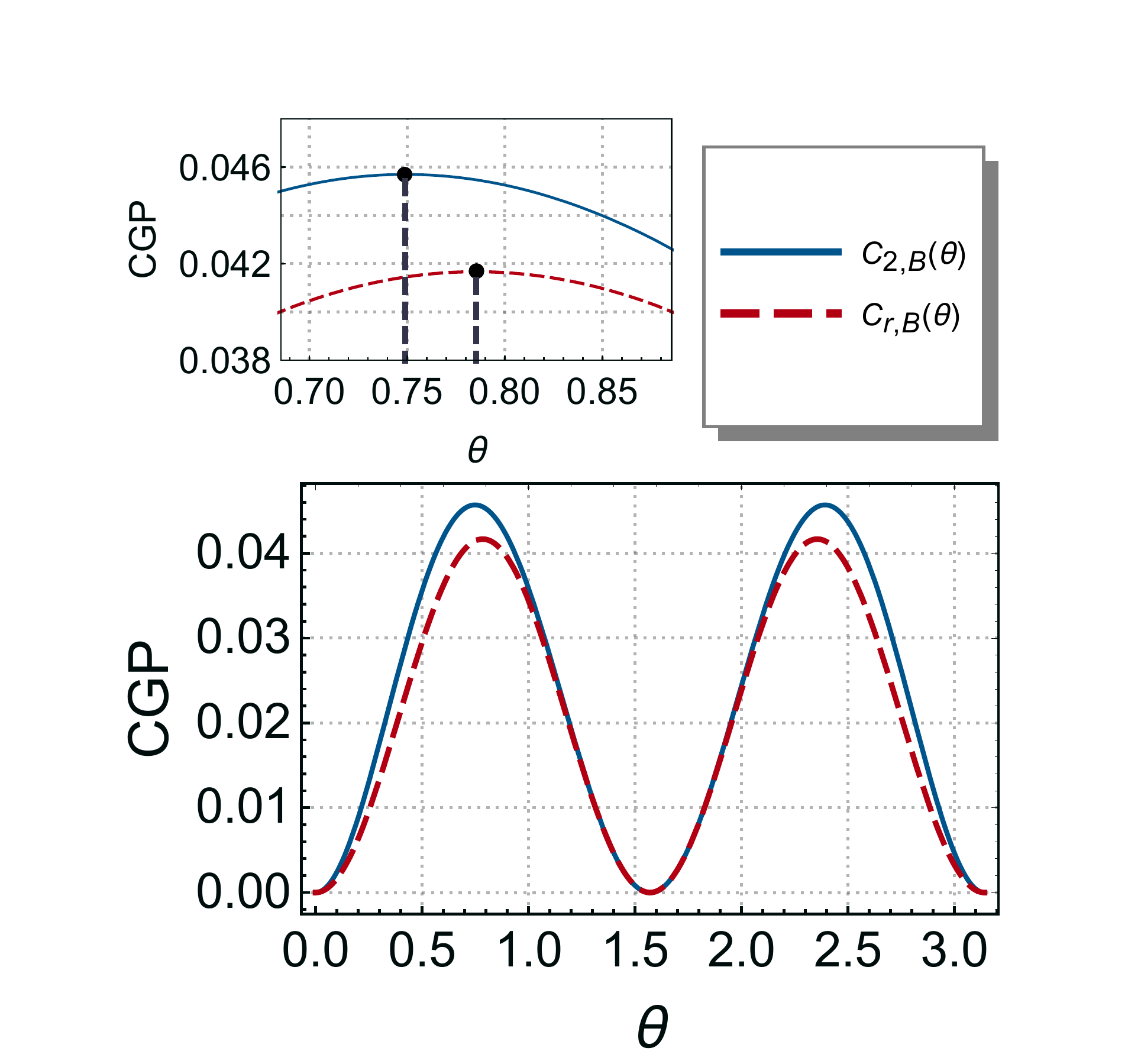}
  \caption{Single qubit coherence generating power of maximally dephasing channels as a function of the angle $\theta$ (see Ex.~\autoref{qubit_example_1} \& Ex.~\autoref{qubit_example_2}), for the relative entropy $c_{r,B}$ and 2-norm $c_{2,B}$ coherence quantifiers. Notice that the maximum is obtained for slightly different values of the angle $\theta$.} \label{relative_figure}
\end{figure}

Let us now specialize the above for a 2-level system.

\begin{example}[Single qubit maximal dephasing: 2-norm] \label{qubit_example_1}
  Consider a qubit ($\mathcal H = \Span \{ \ket{0},\ket{1} \}$) with its coherence quantified with respect to $B = \{ \ket{0} \bra{0}, \ket{1} \bra{1} \}$ and any maximally dephasing channel $\mathcal D_{B'}$, where $B' = \{ \ket{\psi_0} \bra{\psi_0}, \ket{\psi_1} \bra{\psi_1} \}$ ($\braket{\psi_0|\psi_1}=0$). For the qubit case $\mathcal D_{B'}$  can be parametrized through the (Bloch sphere) angles $\theta$ and $\phi$, where $\ket{\psi_0(\theta,\phi)} \bra{\psi_0(\theta,\phi)} = 1/2 \left( I + \bs v \cdot \bs \sigma \right)$, with $\bs v = (\sin \theta \cos \phi, \sin \theta \sin \phi, \cos \theta)$. From Prop.~\autoref{2norm_CGP_prop} we have
  \begin{gather*}
      X_U = \begin{pmatrix}
  \cos^2(\theta/2) & \sin^2(\theta/2) \\
  \sin^2(\theta/2) & \cos^2(\theta/2)
  \end{pmatrix}
  \end{gather*}
  (independent of $\phi$) and hence $C_{2,B} = 1/24 \sin^2(2 \theta)$ (with $\theta \in [0,\pi])$. Observe that the upper bound from Eq.~\eqref{2-norm_CGP_bounds} is achieved for $\theta \in \{ \pi / 4,3 \pi /4 \}$. On the other hand, for $\theta \in \{ 0, \pi/2, \pi \}$ the CGP vanishes. The cases $\theta = 0$ and $\theta = \pi$ give $B = B'$ but the case $\theta = \pi/2$ (for all $\phi$) corresponds to $B'$ being a mutually unbiased basis of $B$. In all such cases $\left[ \mathcal D_{B},\mathcal D_{B'} \right] = 0$.
\end{example}

In Ref.~\cite{coherence_1} it was shown that, for unital quantum channels $\mathcal E$,
\begin{gather}
C_{2,B} (\mathcal E) \le \frac{d-1}{d(d+1)} = 4 C_{2,B}^{max}(d) \,\;,
\end{gather}
where the maximum is achieved over unitary $\mathcal U (\cdot) = U (\cdot) U^\dagger $ with $\left| \braket{i | U | j} \right|^2 = 1/d$ for all $i,j$ (e.g. the quantum Fourier transform). For maximally dephasing processes observe that the optimal 2-norm CGP can be at most one quarter of the maximum value for the CGP (over all unital CPTP maps). In \autoref{appendix_maximum} of the Appendix we provide an explicit construction to show that the upper bound for maximally dephasing processes $C_{2,B}^{max}(d)$ is achievable for Hilbert space dimensions $d \le 13$ while we also give a set of sufficient conditions for the bound to be achievable in any dimension.

In Example \autoref{qubit_example_1} it was noticed that for a qubit system $C_{2,B}\left(  \mathcal D_{B'} \right)$ vanishes when $B'$ is a mutually unbiased basis of $B$. This observation holds true for any finite dimensional system. Consider a mutually unbiased basis $\{ \ket{i'}\}_{i=1}^d$ such that $\left| \braket{j|i'} \right|^2 = 1/d$. Then the matrix $X_U$ from Eq.~\eqref{dephasing_2norm_CGP} has matrix elements $\left( X_U \right)_{ij} = 1/d$ and therefore $C_{2,B}\left(  \mathcal D_{B'} \right) = 0$.

\subsubsection{Relative entropy of coherence} \label{relative_section}

We will now give a set of results ``parallel'' to Prop.~\autoref{2norm_CGP_prop} using as coherence quantifier, instead of $c_{2,B}$, the relative entropy of coherence $c_{r,B}$. Given a $d$-dimensional probability vector $\bs p$ we denote its \textit{subentropy} \cite{jozsa1994lower} as $Q(\bs p)$, defined as
\begin{gather}
  Q(\bs p) \coloneqq -\sum_{i=1}^d \frac{p_i^d \ln p_i}{\prod_{j \ne i} (p_i - p_j)} \,\;,
\end{gather}
where if two or more of the $p_i$'s are equal then the limit should be taken as they become equal. The definition is extended to column-stochastic matrices $X \in  (\mathbb R_0^+)^{d \times d}$ as $Q(X) \coloneqq 1/d \sum_j Q(\bs p_j) $, where $(\bs p_j)_{i} = (X)_{ij}$.

\begin{prop}[Relative entropy CGP of maximal dephasing]   \label{relative_CGP_prop}
  Let $B = \{ \ket{i} \bra{i}\}_{i=1}^d$ and $B' = \{ \ket{i'} \bra{i'} \}_{i=1}^d$ be complete families of rank-1 orthogonal projectors and $U \in U(d)$ be a unitary operator such that $\ket{i'} = U \ket{i}$ for all $i=1,\dotsc,d$. Then
  \begin{enumerate}[(i)]
    \item The relative entropy CGP of the maximally dephasing channel $\mathcal D_{B'}$ is given by
    \begin{gather}
      C_{r,B}(\mathcal D_{B'}) = Q\left(X_U X_U^T \right) - Q\left(X_U \right) \label{relative_entropy_CGP_formula}
    \end{gather}
    where $X_U \in \mathbb R ^{d \times d}$ is bistochastic  with $\left(X_U\right)_{ij} = |\!\braket{i |U| j}\!|^2$ and $Q(X)$ denotes the subentropy of $X$.
    \item $C_{r,B}(\mathcal D_{B'}) = 0$ if and only if $\left[ \mathcal D_B , \mathcal D_{B'} \right] = 0$.
  \end{enumerate}
\end{prop}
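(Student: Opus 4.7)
The plan for part (i) is to reduce the CGP integral to a simplex average of Shannon entropies and then apply a single scalar integration identity. First I would observe that, since $\ket{i'} = U\ket{i}$, the output $\mathcal D_{B'}(\rho_{in})$ is diagonal in $B'$ with eigenvalues $(X_U^T \bs p)_j = \sum_i (X_U)_{ij} p_i$, using $|\braket{j'|i}|^2 = (X_U)_{ij}$. A further pinching by $\mathcal D_B$ gives a state diagonal in $B$ with eigenvalues $X_U X_U^T \bs p$. Substituting into Eq.~\eqref{CGP_relative} and using $c_{r,B}(\sigma) = S(\mathcal D_B \sigma) - S(\sigma)$ reduces the CGP to
\begin{equation}
C_{r,B}(\mathcal D_{B'}) = \int d\mu_{unif}(\bs p) \bigl[ H(X_U X_U^T \bs p) - H(X_U^T \bs p) \bigr] .
\end{equation}

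The core technical input is then an integration identity valid for any column-stochastic matrix $M \in \mathbb{R}^{d \times d}$ of the form
\begin{equation}
\int d\mu_{unif}(\bs p)\, H(M \bs p) = Q(M^T) + \kappa_d ,
\end{equation}
where $\kappa_d$ depends only on $d$. I would establish this by expanding $H(M\bs p) = \sum_i \eta(\bs{m}^{(i)} \cdot \bs p)$ with $\eta(t) = -t \ln t$ and $\bs{m}^{(i)}$ the $i$-th row of $M$, and then applying the scalar identity $\int d\mu_{unif}(\bs x)\, \eta(\bs r \cdot \bs x) = Q(\bs r)/d + C_d$ valid for any probability vector $\bs r$. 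This latter identity follows upon identifying the uniform simplex measure with the distribution of overlap probabilities $\{|\braket{\psi|i}|^2\}_i$ of a Haar-random pure state and then invoking a Jozsa-Robb-Wootters--type integral representation of the subentropy. Summing over $i$, and using that the rows of $M$ are the columns of $M^T$ together with the matrix-level definition $Q(X) = \tfrac{1}{d}\sum_j Q(\bs p_j)$, yields the displayed formula. Applying it to the two terms of the CGP integral and invoking the symmetry $(X_U X_U^T)^T = X_U X_U^T$, the constants $\kappa_d$ cancel and the stated result $C_{r,B}(\mathcal D_{B'}) = Q(X_U X_U^T) - Q(X_U)$ follows. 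I expect this scalar integration identity to be the main technical obstacle, as it is the one step with no direct analog in the 2-norm proof of Prop.~\autoref{2norm_CGP_prop}.

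For part (ii), the direction $[\mathcal D_B, \mathcal D_{B'}] = 0 \Rightarrow C_{r,B}(\mathcal D_{B'}) = 0$ is immediate: commutation together with $\mathcal D_B \rho_{in} = \rho_{in}$ yields $\mathcal D_{B'}(\rho_{in}) = \mathcal D_{B'} \mathcal D_B(\rho_{in}) = \mathcal D_B \mathcal D_{B'}(\rho_{in}) \in I_B$, so $c_{r,B}[\mathcal D_{B'}(\rho_{in})] = 0$ pointwise and the CGP integral vanishes. Conversely, $C_{r,B}(\mathcal D_{B'}) = 0$, combined with the non-negativity of $c_{r,B}$ and the continuity of $\bs p \mapsto c_{r,B}[\mathcal D_{B'}(\rho_{in}(\bs p))]$, forces $\mathcal D_{B'}(\rho_{in}) \in I_B$ for every $\rho_{in} \in I_B$, which is equivalent to the operator identity $\mathcal D_B \mathcal D_{B'} \mathcal D_B = \mathcal D_{B'} \mathcal D_B$. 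Taking the superoperator adjoint and using that $\mathcal D_B$ and $\mathcal D_{B'}$ are self-adjoint gives $\mathcal D_B \mathcal D_{B'} \mathcal D_B = \mathcal D_B \mathcal D_{B'}$, and comparing the two equalities yields $[\mathcal D_B, \mathcal D_{B'}] = 0$.
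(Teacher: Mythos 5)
Your argument is correct and essentially mirrors the paper's: for part (i) you reduce the CGP to simplex averages of $H(X_U^T \bs p)$ and $H(X_U X_U^T \bs p)$ and invoke the averaging identity $\int d\mu_{unif}(\bs p)\, H(M\bs p) = Q(M^T) + H_d - 1$, which the paper simply cites as a lemma of Zhang \textit{et al.} while you sketch its proof via the Jozsa--Robb--Wootters scalar identity (a valid route; just note that the row-wise expansion requires $M$ to be row-stochastic rather than column-stochastic, which is immaterial here since $X_U^T$ and $X_U X_U^T$ are bistochastic). Part (ii) also coincides in substance with the paper --- faithfulness of $c_{r,B}$ forcing invariance of $\Ran(\mathcal D_B)$ under $\mathcal D_{B'}$ --- with your adjoint-taking step ($\mathcal D_B \mathcal D_{B'} \mathcal D_B = \mathcal D_{B'} \mathcal D_B$ plus self-adjointness giving the commutator) being a compact variant of the paper's range/kernel invariance argument.
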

\begin{proof}
  \textbf{(i)} The proof is based on a lemma from \cite{ZhangToAppear}, stating that
  \begin{gather} \label{subentropy_lemma}
  \int d \mu_{unif} (\bs p) \, H\left(  \sum_j B_{ij} p_j \right) = H_d -1 +Q\left(B^T\right) \,\;,
  \end{gather}
  where $B$ is a $d \times d$ bistochastic matrix and $H_d$ is the $d$-th harmonic number. Here we want to calculate the quantity in Eq.~\eqref{CGP_relative} for $\mathcal E = \mathcal D_{B'} = \mathcal U \mathcal D_{B} \mathcal U ^\dagger$, where $\mathcal U (\cdot) = U (\cdot) U^\dagger$. Observe that
  \begin{align*}
  S(\mathcal D_{B'} \rho_{in}) &= S\left(\mathcal U \mathcal D_B \mathcal  U^\dagger \sum_i(p_i P_i)\right) = S\left( \mathcal D_B \mathcal  U^\dagger \sum_i(p_i P_i) \right)\\ & = H\left(\sum_j (X_{U^\dagger})_{ij} p_j\right) = H\left(\sum_j (X_U)^T_{ij} p_j\right) \,\;,
  \end{align*}
  where we used the unitary invariance of the von Neumann entropy and the fact that $\mathcal D_B \mathcal U \sum_i (p_i P_i) = \sum_{i,j} \left(X_U\right)_{ij} p_j P_j$. From the lemma Eq.~\eqref{subentropy_lemma} we therefore get
  \begin{gather*}
  \int d \mu_{unif} (\bs p) \, S(\mathcal D_{B'} \rho_{in}) = H_d -1 +Q(X_U)\,\;.
  \end{gather*}
  In a similar fashion,
  \begin{align*}
    S(\mathcal D_{B}\mathcal D_{B'} \rho_{in}) & = S\left(\mathcal D_{B} \mathcal U \mathcal D_{B} \mathcal U^\dagger \sum_i(p_i P_i) \right) \\
    &= S\left( \sum_{i,j} (X_U X_U^T)_{ij} p_j P_j  \right)\,\;,
  \end{align*}
  therefore
  \begin{gather*}
  \int d \mu_{unif} (\bs p) \, S(\mathcal D_{B} \mathcal D_{B'} \rho_{in}) = H_d -1 +Q(X_U X_U^T) \,\;.
  \end{gather*}
  Combining the two expressions we get the desired result. \\ \\
  \textbf{(ii)} The relative entropy of coherence is a faithful measure ($c_{r,B}(\sigma) = 0 $ iff $\sigma \in I_B$) and, therefore, $C_{r,B}\left( \mathcal D_{B'} \right) = 0$ iff $\mathcal D_{B'} (\sigma) \in I_B $ for all $\sigma \in I_B$. In other words, $C_{r,B}\left( \mathcal D_{B'} \right) = 0$ is equivalent to $\Ran \left( \mathcal D_B \right)$ being invariant under the action of $\mathcal D_{B'}$. But $\mathcal D_{B'}$ is hermitian and therefore normal and as a result the vanishing CGP condition holds iff both $\Ran \left( \mathcal D_B \right)$ and $\Ker \left( D_B \right)$ are invariant over the action of $\mathcal D_{B'}$. We will now argue that this condition is equivalent to $\left[ \mathcal D_B , \mathcal D_{B'} \right] = 0$.

  Suppose $\left[ \mathcal D_B , \mathcal D_{B'} \right] = 0$. Then clearly $\Ker \left(\mathcal D_B \right)$ is an invariant subspace of $\mathcal D_{B'}$ and so is therefore $\Ran \left(\mathcal D_B \right)$ (from normality of $\mathcal D_{B'}$).

  Conversely, suppose that both $\Ker \left( D_B \right)$ and  $\Ran \left( \mathcal D_B \right)$ are invariant over the action of $\mathcal D_{B'}$. Then for any operator $X \in \Ker \left(\mathcal D_B \right)$ we have $\mathcal D_{B'} \mathcal D_B X = \mathcal D_{B} \mathcal D_{B'}X = 0$. But also for any $Y \in \Ran \left( \mathcal D_B \right)$ we have $\mathcal D_{B'} \mathcal D_B Y = \mathcal D_{B} \mathcal D_{B'} Y$ (since $\mathcal D_B Y = Y$). As a result $\left[ \mathcal D_B , \mathcal D_{B'} \right] = 0$.

  Notice that the proof relies only on the faithfulness of the coherence measure $c_{r,B}$ and thus the result holds true for any faithful coherence measure $c_{B}$.
\end{proof}

\begin{example}[Single qubit maximal dephasing: relative entropy] \label{qubit_example_2}
  Assume a single qubit as in Ex.~\autoref{qubit_example_1}. Using Eq.~\eqref{relative_entropy_CGP_formula} we can calculate the relative entropy CGP for maximally dephasing $\mathcal D_{B'}$. Setting $c \coloneqq \cos^2 (\theta/2)$ and $s \coloneqq \sin^2 (\theta/2)$, we get
  \begin{align}
  C_{r,B}\left( \mathcal D_{B'}  \right) = &\frac{1}{(c-s)^{2}} \Big( \big[ c^2(c-s) \log c + 2 s^2 c^2 \log(2cs) \nonumber \\ &- \frac{1}{2}(c^2 + s^2)^2 \log(c^2 + s^2) \big] + \left[ s \leftrightarrow c \right] \Big) \,\;.
  \end{align}
  The resulting function is compared with the corresponding one from Ex.~\autoref{qubit_example_1} in \autoref{relative_figure}.
\end{example}

Notice that equations \eqref{dephasing_2norm_CGP} and \eqref{relative_entropy_CGP_formula} are functions of the maximally dephasing superoperators $\mathcal D_B$ and $\mathcal D_{B'}$, which are characterized (according to Prop.~\autoref{unique_dephasing_label}) by the sets of rank-1 projectors $B$ and $B'$, respectively. It is therefore expected that any choice of a unitary $U$ in those equations such that $U \ket{i} = \e^{\im \theta_i} \ket{\sigma(i)'}$, with $\theta_i \in \mathbb R$ and $\sigma \in \mathcal S_d$ (permutation) leaves both $C_{2,B}(\mathcal D_{B'})$ and $C_{r,B}(\mathcal D_{B'})$ unaffected.

\subsection{Coherence generating power of partially dephasing channels} \label{partial_section}

In this section we extend the previous results for the 2-norm CGP to partially dephasing channels.
%This type of dephasing can correspond to a non-selective orthogonal measurement of an observable with degeneracy, with each of the subspaces ${\Ran \left( \Pi_l \right)}$ corresponding to a (maybe degenerate) eigenvalue.

\begin{prop}[2-norm CGP of partial dephasing]
  Let $B = \{ P_i = \ket{i} \bra{i}\}_{i=1}^d$ and $B' = \{ \Pi_k =  \sum_{\alpha=1}^{d_k} \ket{k,\alpha} \bra{k,\alpha} \}_{k=1}^{r}$ be complete families of orthogonal projectors, where $\{ \ket{k,\alpha} \}_{k,\alpha}$ is an orthonormal basis and $d_k \coloneqq \Rank \left( \Pi_k \right)$. Then
  \begin{gather}
    C_{2,B} \left( \mathcal D_{B'} \right) = \frac{1}{d(d+1)}  \Tr \bigl[ Z  \left(I - Z \right) \bigr]\,\;, \label{2-norm_CGP_partial}
  \end{gather}
  where $Z \in \mathbb R ^{d \times d}$ with
  \begin{align}
    Z_{ij} &= \Tr \left( \sum_k P_i \Pi_k P_j \Pi_k \right)  \\ &= \sum_{k=1}^r \sum_{\alpha,\beta=1}^{d_k} \braket{i|k,\alpha} \braket{k,\alpha|j} \braket{j|k,\beta} \braket{k,\beta|i} \,\;. \label{Z_matrix_definition}
  \end{align}
\end{prop}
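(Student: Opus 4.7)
The plan is to apply formula \eqref{CGP_unitals} from the unital case. Since $\mathcal D_{B'}$ is a dephasing channel, it is manifestly unital ($\mathcal D_{B'}(I)=\sum_k \Pi_k = I$), so the expression
\[
C_{2,B}(\mathcal D_{B'}) = \frac{1}{d(d+1)}\sum_i\bigl(\braket{\mathcal D_{B'}P_i,\mathcal D_{B'}P_i} - \braket{\mathcal D_B\mathcal D_{B'}P_i,\mathcal D_B\mathcal D_{B'}P_i}\bigr)
\]
applies, and the task reduces to evaluating the two sums in terms of the matrix $Z$.

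For the first sum, I would use $\mathcal D_{B'}P_i = \sum_k \Pi_k P_i \Pi_k$ and the orthogonality $\Pi_k\Pi_l = \delta_{kl}\Pi_k$. Expanding the inner product and using cyclicity of the trace gives $\braket{\mathcal D_{B'}P_i,\mathcal D_{B'}P_i} = \sum_k \Tr(\Pi_k P_i \Pi_k P_i)$, which matches $Z_{ii}$ once the definition \eqref{Z_matrix_definition} is rewritten via cyclicity. Summing over $i$ therefore yields $\Tr Z$.

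For the second sum, the key computation is to show that $\mathcal D_B\mathcal D_{B'}P_i$ is diagonal in $B$ with entries given by a column of $Z$. Explicitly, $P_j\Pi_k P_i \Pi_k P_j = \langle j|\Pi_k|i\rangle\langle i|\Pi_k|j\rangle\,P_j = |\langle i|\Pi_k|j\rangle|^2 P_j$, so summing over $j$ and $k$ gives $\mathcal D_B\mathcal D_{B'}P_i = \sum_j Z_{ji}P_j$, using the fact that the alternative form of $Z_{ij}$ in \eqref{Z_matrix_definition} collapses to $\sum_k|\langle i|\Pi_k|j\rangle|^2$, which is real, nonnegative, and symmetric. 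Orthonormality of $\{P_j\}$ in the Hilbert--Schmidt inner product then yields $\braket{\mathcal D_B\mathcal D_{B'}P_i,\mathcal D_B\mathcal D_{B'}P_i} = \sum_j Z_{ji}^2$, whose sum over $i$ is $\Tr(Z^2)$ thanks to symmetry of $Z$.

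Combining the two contributions gives $\Tr Z - \Tr Z^2 = \Tr[Z(I-Z)]$ and the claimed formula follows. There is no real obstacle here beyond keeping track of indices; the only subtle point is verifying that $Z$ is symmetric and real, which is what allows $\sum_{i,j}Z_{ji}^2$ to be identified with $\Tr(Z^2)$ (as opposed to $\Tr(Z^TZ)$, which would coincide anyway but whose interpretation as $\Tr(Z^2)$ is crucial for the compact final expression).
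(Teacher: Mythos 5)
Your proposal is correct and follows essentially the same route as the paper: both start from the unital CGP formula \eqref{CGP_unitals} and reduce the two sums to $\Tr Z$ and $\Tr(Z^2)$, with your explicit computation of $\mathcal D_B \mathcal D_{B'} P_i = \sum_j Z_{ji} P_j$ being just a slight reorganization of the paper's trace manipulations. The attention you pay to the reality and symmetry of $Z$ matches the paper's closing remark, so there is nothing to add.
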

\begin{proof}
    Our starting point is Eq.~\eqref{CGP_unitals} with $\mathcal E (\cdot)= \mathcal D_{B'} (\cdot) = \sum_k \Pi_k (\cdot) \Pi_k$. We have
    \begin{align*}
    &\sum_i \braket{\mathcal D_{B'} P_i, \mathcal D_{B'} P_i } = \sum_i \braket{\mathcal D_{B'} P_i, P_i }\\
     = &\sum_{i,k} \Tr \left( \Pi_k P_i \Pi_k P_i \right) = \Tr \left( Z \right)\,\;.
     \end{align*}
     The more detailed expression for $Z$ in terms of the basis elements $\{ \ket{k,\alpha} \}$ follows by writing $\Pi_k = \sum_{\alpha=1}^{d_k} \ket{k,\alpha} \bra{k,\alpha}$ and performing the trace. The second term becomes
     \begin{align*}
     &\sum_i \braket{\mathcal D_B \mathcal D_{B'} P_i, \mathcal D_B \mathcal D_{B'} P_i } =  \sum_i \braket{ \mathcal D_{B'} \mathcal D_B \mathcal D_{B'} P_i, P_i }  \\
     = &\sum_{i,j,k,l} \Tr \left( \Pi_l P_j \Pi_k P_i \Pi_k P_j \Pi_l P_i \right) \\
     = &\sum_{i,j} \Tr \left[   \left( \sum_l \Pi_l P_i \Pi_l P_j \right)  \left( \sum_k \Pi_k P_i \Pi_k P_j \right) \right] = \Tr \left( Z ^2\right)\,\;.
     \end{align*}

     By the cyclic property of the trace, the matrix $Z$ is symmetric and also real since $Z_{ij}^* = \Tr \left[ \sum_k \left( P_i \Pi_k P_j \Pi) \right)^\dagger \right] = Z_{ij}$.
\end{proof}

Let us observe that the $Z$ matrix from Eq.~\eqref{Z_matrix_definition} can be also expressed as a function of the unitary $U$  connecting $B$ and $B'$ (as in Prop.~\autoref{2norm_CGP_prop}). Indeed, for any $U$ such that $ U \ket{j} = \ket{k(j),\alpha(j)} $ then $\braket{i| k(j),\alpha(j)} = U_{ij}$. Of course, such a $U$ is far from unique since it depends on the choice of basis for each subspace corresponding to a $\Pi_l $. In the case where the dephasing is maximal, rather that partial, Eq.~\eqref{2-norm_CGP_partial} reduces to Eq.~\eqref{dephasing_2norm_CGP}. Indeed, since $d_k = 1$ for all $k=1, \dotsc, d$, the label $\alpha$ in $\ket{k,\alpha}$ becomes redundant ($\alpha = 1$) and we can use the notation $\ket{k,\alpha} \to \ket{k'}$. As a result $Z = X_U X_U^T$, where $U$ is a unitary such that $\ket{i'} = U \ket{i}$.

\begin{figure}[t]
  \centering
  \includegraphics[width= \columnwidth]{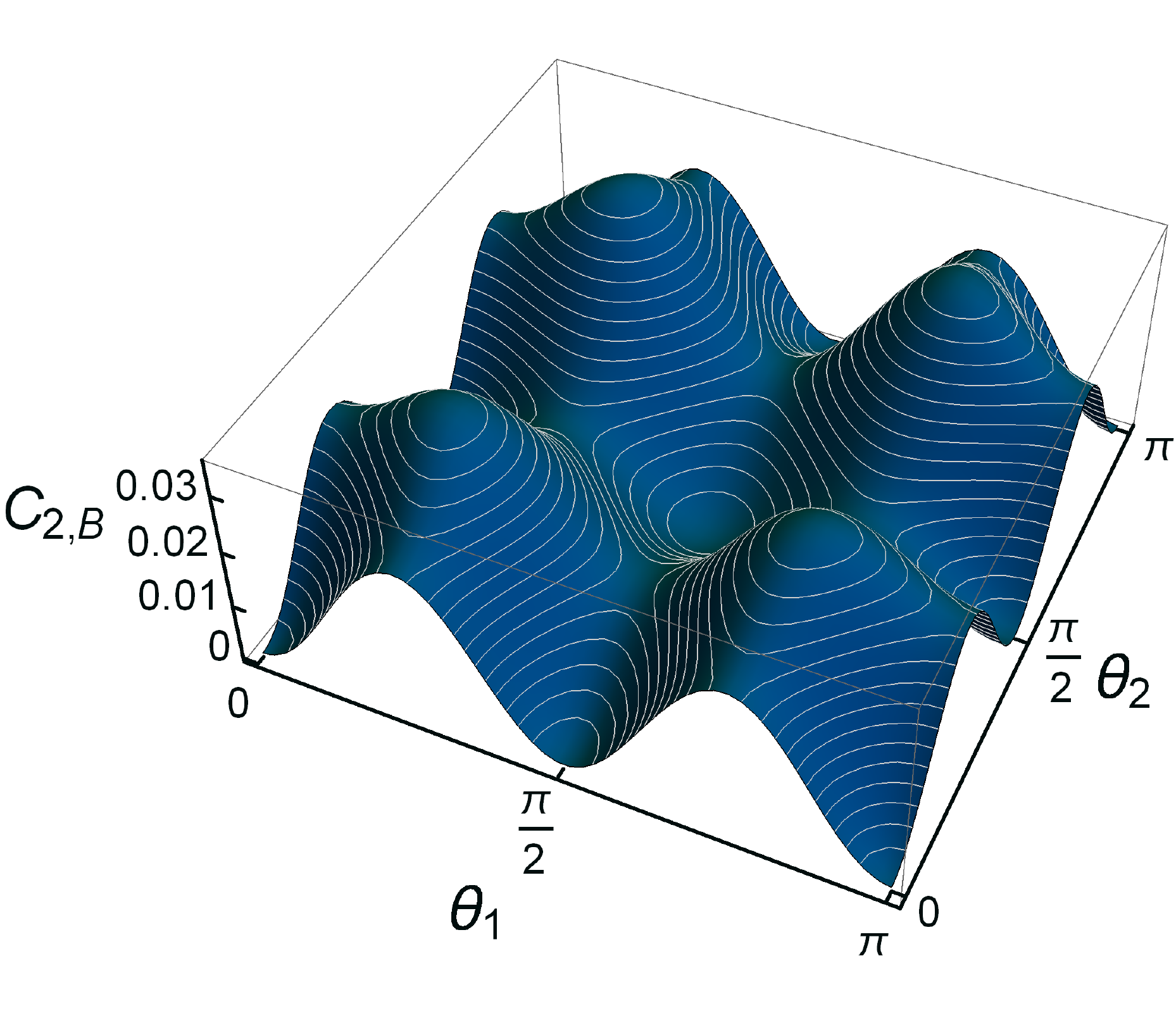}
  \caption{$C_{2,B}\left(  \mathcal D _{B'}\right)$ of Ex.~\autoref{CGP_example_2} is a (symmetric) function of both $\theta_1$ and $\theta_2$. The difference $C_{2,B}\left(  \mathcal D _{B'}\right) - C_{2,B}\left(  \mathcal D _{B''}\right)$ can be positive or negative, depending on the values of $\theta_1$ and $\theta_2$. The maximum value $M_m$ of $C_{2,B}\left(  \mathcal D _{B'}\right)$ satisfies $M_p < M_m < C_{2,B}^{max}(d=4)$ demonstrating that for a 2-qubit system the maximally dephasing $\mathcal D_{B'}$ such that $C_{2,B}(\mathcal D_{B'}) = C_{2,B}^{max}(d=4)$ is not of the form $\mathcal D_{B'} = \mathcal D_{B'_1} \otimes \mathcal D_{B'_2}$ (for any qubit bases $B'_1,B'_2$).}
  %The maximum (over all 2-qubit dephasing channels) $C_{2,B}^{max}(d=4)$  is therefore not achieved for $\theta_1, \theta_2 \in \{ \pi/4 , 3\pi/4 \}$.
  \label{figure_1}
\end{figure}

\begin{example}[2-qubit partial and maximal dephasing] \label{CGP_example_2}
  Consider a 2-qubit Hilbert space $\mathcal H \cong \mathbb C^2 \otimes \mathbb C^2$ with $\mathcal H = \Span \{ \ket{00}, \ket{01}, \ket{10}, \ket{11} \}$ and coherence quantified with respect to $B = \{ P_{00},P_{01},P_{10},P_{11} \}$ ($P_{ij} = \ket{ij}\bra{ij}$). Define the complete set of rank-1 projectors $B' = \{ P'_{00},P'_{01},P'_{10},P'_{11} \}$ where $P'_{ij} = \ket{\psi_i (\theta_1, \phi_1) \psi_j(\theta_2, \phi_2)} \bra{\psi_i (\theta_1, \phi_1) \psi_j(\theta_2, \phi_2)}$ (notation as in Example \autoref{qubit_example_1}). We look at dephasing $\mathcal D_{B''}$ with respect to $B'' = \{ \Pi_1 = P'_{00} + P'_{01}, \Pi_2 = P'_{10} + P'_{11} \}$. Applying Eq.~\eqref{2-norm_CGP_partial} we get $C_{2,B}\left(  \mathcal D _{B''}\right) = \frac{1}{40} \sin^2(2 \theta_1)$. Clearly, the maximum value over the dephasing channels examined is $M_p = 1/40$. The CGP of the maximally dephasing  $C_{2,B}\left(  \mathcal D _{B'}\right)$ is plotted in \autoref{figure_1}.
\end{example}

\section{Dephasing processes through Lindblad dynamics} \label{Lindblad_section}

\subsection{Coherence generating power of dephasing Lindbladians} \label{CGP_Lindblad_section}

In this section we consider quantum dynamical processes that lead in the long time limit to dephasing of the system under consideration. More specifically, we assume Markovian dynamics of Lindblad form (see, e.g., \cite{breuer2002theory})
\begin{gather}
  \dot \rho = \mathcal L \, \rho \coloneqq - \im [ H,\rho ] + \sum_\alpha \left( L_\alpha \rho L_\alpha ^\dagger - \frac{1}{2} \left\{ L_\alpha ^\dagger L_\alpha , \rho \right\} \right) \,\;, \label{Lindblad}
\end{gather}
where $H$ is the system Hamiltonian and $\{ L_\alpha \}_\alpha$ are the Lindblad operators. We first distinguish those Lindblad time evolutions that maximally dephase in the long time limit.
\begin{definition}
  We characterize an operator $\mathcal L \in \mathcal B\left(\mathcal B \left(\mathcal H \right) \right)$ of the Lindblad form Eq.~\eqref{Lindblad} as a \textbf{maximally dephasing Lindbladian} iff $\lim_{t \to \infty} \exp\left( \mathcal L t \right) = \mathcal D_B$, for some maximally dephasing channel $\mathcal D_B$.
\end{definition}

Our next steps will be to characterize the maximally dephasing Lindbladians and then calculate the 2-norm CGP for all such time evolutions as a function of time. Naturally, we will recover part of our previous results for the CGP of maximally dephasing channels in the limit $t \to \infty$.

We begin with a Lemma.

\begin{lemma*}
  Let $\mathcal L$ be a Lindbladian of the general form Eq.~\eqref{Lindblad}. Then $\ket{\psi} \bra{\psi} \in \Ker \left( \mathcal L \right)$ if and only if the following conditions hold simultaneously: \begin{inparaenum}[(a)]
  \item $\ket{\psi}$ is an eigenvector of $L_\alpha$ for all $\alpha$, and
  \item $\ket{\psi}$ is an eigenvector of $\,\im H + \frac{1}{2}\sum_\alpha \braket{\psi | L_\alpha | \psi} L_\alpha^\dagger$.
  \end{inparaenum}
\end{lemma*}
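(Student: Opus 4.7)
The plan is to prove the equivalence by first recasting $\mathcal L(\ket\psi\bra\psi)$ into a form that makes both conditions transparent. Setting $\lambda_\alpha \coloneqq \braket{\psi|L_\alpha|\psi}$ and $M \coloneqq \im H + \frac{1}{2}\sum_\alpha \lambda_\alpha L_\alpha^\dagger$ (the operator appearing in (b)), the crux is to establish that \emph{whenever} (a) holds, i.e.\ $L_\alpha \ket\psi = \lambda_\alpha \ket\psi$ for every $\alpha$, the dissipator combines with $-\im[H,\,\cdot\,]$ into the compact identity
\begin{gather*}
\mathcal L(\ket\psi\bra\psi) = -M\ket\psi\bra\psi - \ket\psi\bra\psi M^\dagger + \Bigl(\textstyle\sum_\alpha|\lambda_\alpha|^2\Bigr)\ket\psi\bra\psi.
\end{gather*}
Both implications then reduce to one-line linear-algebra arguments.

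For the forward direction, assume (a) and (b), so $M\ket\psi = \mu\ket\psi$ for some $\mu\in\mathbb C$. Computing $\mu = \braket{\psi|M|\psi}$ directly, using $\braket{\psi|L_\alpha^\dagger|\psi}=\overline{\lambda_\alpha}$, yields $\mu = \im\braket{\psi|H|\psi} + \tfrac{1}{2}\sum_\alpha|\lambda_\alpha|^2$, hence $2\Real(\mu) = \sum_\alpha|\lambda_\alpha|^2$. Substituting $M\ket\psi = \mu\ket\psi$ and $\ket\psi\bra\psi M^\dagger = \overline{\mu}\ket\psi\bra\psi$ into the identity displayed above makes $\mathcal L(\ket\psi\bra\psi)$ vanish.

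For the converse, assume $\mathcal L(\ket\psi\bra\psi)=0$ and extract (a) first by evaluating the expectation value $\braket{\psi|\mathcal L(\ket\psi\bra\psi)|\psi}$. The Hamiltonian commutator drops out and one is left with
\begin{gather*}
\sum_\alpha\Bigl(|\braket{\psi|L_\alpha|\psi}|^2 - \|L_\alpha\ket\psi\|^2\Bigr) = 0,
\end{gather*}
each summand of which is nonpositive by the Cauchy--Schwarz inequality. Hence every term vanishes individually, forcing equality in Cauchy--Schwarz and therefore $L_\alpha\ket\psi = \lambda_\alpha\ket\psi$, which is (a). With (a) in hand the compact identity applies, and evaluating the resulting zero operator on the (normalized) vector $\ket\psi$ gives $M\ket\psi = \bigl[\im\braket{\psi|H|\psi} + \tfrac{1}{2}\sum_\alpha|\lambda_\alpha|^2\bigr]\ket\psi$, which is (b).

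The main obstacle is essentially bookkeeping: one must carefully expand $L_\alpha\ket\psi\bra\psi L_\alpha^\dagger$ and the anticommutator $\{L_\alpha^\dagger L_\alpha,\ket\psi\bra\psi\}$ under assumption (a), tracking complex conjugates, to recognize the assembly into the $-M(\cdot)-(\cdot)M^\dagger$ structure. Once that rewriting is in place, the Cauchy--Schwarz step pinning down (a) in the reverse direction, and the automatic compatibility $2\Real\braket{\psi|M|\psi} = \sum_\alpha|\lambda_\alpha|^2$ in the forward direction, both fall out without further calculation.
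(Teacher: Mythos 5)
Your proof is correct and follows essentially the same route as the paper: the paper verifies $\mathcal L(\ket{\psi}\bra{\psi})=0$ block-by-block in a basis $\{\ket{\psi},\ket{\psi_j^\perp}\}$, where the diagonal element $\braket{\psi|\mathcal L(\ket{\psi}\bra{\psi})|\psi}=\sum_\alpha\bigl(|\braket{\psi|L_\alpha|\psi}|^2-\|L_\alpha\ket{\psi}\|^2\bigr)=0$ forces condition (a) — exactly your Cauchy--Schwarz step, which the paper leaves implicit — and the off-diagonal elements then reduce to condition (b). Your repackaging of the remaining algebra as the identity $\mathcal L(\ket{\psi}\bra{\psi})=-M\ket{\psi}\bra{\psi}-\ket{\psi}\bra{\psi}M^\dagger+\bigl(\sum_\alpha|\lambda_\alpha|^2\bigr)\ket{\psi}\bra{\psi}$ applied to $\ket{\psi}$ is just a cleaner bookkeeping of the same argument, not a different method.
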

\begin{proof}
  Let $\{  \ket{\psi _j ^\perp}\}_{j=1}^{d-1}$ be a set of orthonormal vectors with $\braket{\psi | \psi_j^\perp} = 0$ for all $j = 1, \dotsc, {d-1}$. We have $\mathcal L \left( \ket{\psi} \bra{\psi} \right) = 0 $ iff the following hold true:
  \begin{inparaenum}[$(a')$]
  \item $\braket{\psi  |\mathcal L \left( \ket{\psi} \bra{\psi} \right)| \psi } = 0$,
  \item $\braket{\psi  |\mathcal L \left( \ket{\psi} \bra{\psi} \right)| \psi_j ^\perp } = 0$ for all $j$,
  \item $\braket{\psi_k ^\perp  |\mathcal L \left( \ket{\psi} \bra{\psi} \right)| \psi_j ^\perp } = 0$ for all $j,k$.
  \end{inparaenum}
  By plugging in the Lindblad form for $\mathcal L$ (equation \eqref{Lindblad}), it follows that condition $(a')$ holds true iff condition $(a)$ it true. Then, given $(a)$, condition $(b')$ is trivially satisfied. Finally condition $(c')$, again given $(a)$, reduces to $(b)$.
\end{proof}
We now establish the necessary and sufficient conditions to have maximal dephasing under Lindbladian dynamics and then we calculate the 2-norm CGP for all such processes.
\begin{prop} \label{maximal_dephasing_lindblad_prop}
  Let $B' = \{ P'_i \coloneqq \ket{i'} \bra{i'} \}_{i=1}^d$ and $\mathcal D_{B'}$ be the associated maximally dephasing channel. Then for Lindbladian dynamics:
  \begin{enumerate}[(i)]
    \item $\lim_{t \to \infty} \exp \left( \mathcal L \, t  \right) = \mathcal D_{B'}$
          if and only if the following conditions hold simultaneously: \begin{inparaenum}[(a)]
                            \item The Hamiltonian $H$ is diagonal in $B'$.
                            \item All Lindblad operators $L_\alpha$ are diagonal in $B'$.
                            \item For every $i \ne j$ there exists an $\alpha$ such that $\braket{ i' | L_\alpha | i'} \ne \braket{ j' | L_\alpha | j'}$.
                        \end{inparaenum}
    \item If $\lim_{t \to \infty} \exp \left( \mathcal L \, t  \right) = \mathcal D_{B'}$ then $\exp\left( \mathcal L \, t \right)$ is unital for $t \ge 0$ and $\mathcal L (\ket{i'}\bra{j'}) = \lambda_{ij} \ket{i'} \bra{j'}$, with $\lambda_{ij} = - \im (E_{i} - E_{j}) + \sum_\alpha \left( \left(L_\alpha \right)_{ii}  \left(L_\alpha\right)_{jj}^* - 1/2 \left| \left(L_\alpha \right)_{ii}   \right|^2 -1/2  \left| \left(L_\alpha \right)_{jj}   \right|^2\right)\,$, where $E_i = \braket{i' | H | i'}$ and $\left(L_\alpha\right)_{ii} =  \braket{ i' |  L_\alpha |i'}$.
    \item Let in addition $B = \{ P_i \}_{i=1}^d$ and $U \in U(d)$ be a unitary operator such that $\ket{i'} = U \ket{i}$ for all $i=1,\dotsc,d$. If $\lim_{t \to \infty} \exp \left( \mathcal L \, t  \right) = \mathcal D_{B'}$, then
    \begin{gather}
       C_{2,B}\left[ \exp \left( \mathcal L \, t \right) \right] = \frac{1}{d(d+1)} \left[ \Tr\left( X_U \Lambda (t) X_U^T  \right) - \Tr \left( Y_U (t) \, Y_U^T (t) \right) \right] \,\;, \label{Dephasing_CGP_2norm_time}
    \end{gather}
    where $X_U \in  \mathbb R^{d \times d}$ is bistochastic  with $\left(X_U\right)_{ij} = |\!\braket{i |U| j}\!|^2$, $\Lambda (t),Y_U(t)  \in \mathbb R^{d \times d}$ with $\left[\Lambda (t)\right]_{i j}  = \exp \left( 2 \Real (\lambda_{ij})\, t  \right)$ and $\left[ Y_U(t) \right] _{ij} = \sum_{k,l} \exp \left( \lambda_{kl} t  \right) \, U_{il} U_{ik}^* U_{jk} U_{jl}^*\,$.
  \end{enumerate}
\end{prop}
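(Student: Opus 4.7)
Parts (i) and (ii) are tightly coupled, and my plan is to establish them together by explicitly diagonalizing $\mathcal{L}$ in the operator basis $\{\ket{i'}\bra{j'}\}_{i,j}$. For the forward direction of (i), the assumption $\lim_{t\to\infty}e^{\mathcal{L}t}=\mathcal{D}_{B'}$ together with the semigroup property forces $\mathcal{L}\,\mathcal{D}_{B'}=0$: indeed, $e^{\mathcal{L}s}\mathcal{D}_{B'}=\lim_{t\to\infty}e^{\mathcal{L}(s+t)}=\mathcal{D}_{B'}$ for all $s\ge 0$, so differentiating at $s=0$ gives the claim. In particular $\mathcal{L}(\ket{i'}\bra{i'})=0$ for each $i$, and the preceding Lemma applies: its condition (a) forces each $\ket{i'}$ to be a joint eigenvector of all $L_\alpha$, i.e.\ every $L_\alpha$ is diagonal in $B'$; given this, condition (b) of the Lemma forces $H$ to be diagonal in $B'$ as well, since the $L_\alpha^\dagger$ contribution already is.

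For the converse and for part (ii), I would start from (a) and (b) and compute $\mathcal{L}(\ket{i'}\bra{j'})$ directly from the Lindblad form. Diagonality of $H$ and the $L_\alpha$ collapses every term to a multiple of $\ket{i'}\bra{j'}$, yielding the eigenvalue $\lambda_{ij}$ stated in (ii). A short rearrangement gives
\begin{gather*}
\Real(\lambda_{ij}) = -\tfrac{1}{2}\sum_\alpha |(L_\alpha)_{ii}-(L_\alpha)_{jj}|^2,
\end{gather*}
which vanishes for $i=j$ and is strictly negative for $i\ne j$ precisely under condition (c). Hence $e^{\mathcal{L}t}\ket{i'}\bra{j'}=e^{\lambda_{ij}t}\ket{i'}\bra{j'}\to\delta_{ij}\ket{i'}\bra{j'}$, and linearity yields $e^{\mathcal{L}t}\to\mathcal{D}_{B'}$. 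Unitality in (ii) is then immediate from $\mathcal{L}(I)=\sum_i\mathcal{L}(\ket{i'}\bra{i'})=0$.

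For part (iii), unitality lets me invoke Eq.~\eqref{CGP_unitals}, so I only need the two Hilbert-Schmidt sums $\sum_i\braket{e^{\mathcal{L}t}P_i,e^{\mathcal{L}t}P_i}$ and $\sum_i\braket{\mathcal{D}_B e^{\mathcal{L}t}P_i,\mathcal{D}_B e^{\mathcal{L}t}P_i}$. Expanding $P_i=\sum_{k,l}U_{ik}^* U_{il}\ket{k'}\bra{l'}$ and applying part (ii) gives $e^{\mathcal{L}t}P_i=\sum_{k,l}U_{ik}^* U_{il}\,e^{\lambda_{kl}t}\ket{k'}\bra{l'}$. Computing the first inner product, orthonormality of the $\{\ket{k'}\bra{l'}\}$ kills all cross-terms and produces $\sum_{k,l}|U_{ik}|^2|U_{il}|^2\,e^{2\Real(\lambda_{kl})t}$; summing over $i$ and recognizing $\sum_i |U_{ik}|^2|U_{il}|^2=(X_U^T X_U)_{kl}$ collapses this to $\Tr(X_U\,\Lambda(t)\,X_U^T)$ after using the symmetry of $\Lambda(t)$. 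For the dephased term I would compute $\braket{j|e^{\mathcal{L}t}P_i|j}=\sum_{k,l}U_{ik}^* U_{il}\,U_{jk}U_{jl}^*\,e^{\lambda_{kl}t}=[Y_U(t)]_{ij}$, note it is real since $P_i$ is Hermitian, and conclude $\sum_i\braket{\mathcal{D}_B e^{\mathcal{L}t}P_i,\mathcal{D}_B e^{\mathcal{L}t}P_i}=\sum_{i,j}[Y_U(t)]_{ij}^2=\Tr(Y_U(t)\,Y_U(t)^T)$. Substituting into Eq.~\eqref{CGP_unitals} assembles Eq.~\eqref{Dephasing_CGP_2norm_time}.

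The main obstacle is the careful index bookkeeping in part (iii): four-index sums with mixed $U$ and $U^*$ factors must be contracted into the correct $X_U$-vs-$X_U^T$ and $Y_U$-vs-$Y_U^T$ patterns, and keeping the conjugates aligned is where the proof is most error-prone. The conceptual content sits entirely in the Lemma-based diagonalization of (i)-(ii), which is clean once one observes that $\mathcal{L}\mathcal{D}_{B'}=0$ is forced by the semigroup structure.
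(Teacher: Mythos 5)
Your proposal is correct and follows essentially the same route as the paper: characterize the fixed points $P'_i \in \Ker(\mathcal L)$ via the Lemma to get conditions (a) and (b), diagonalize $\mathcal L$ on $\{\ket{i'}\bra{j'}\}$ to obtain $\lambda_{ij}$ and tie the decay of off-diagonal terms to condition (c), then feed the resulting spectral form of $\e^{\mathcal L t}$ into Eq.~\eqref{CGP_unitals} for part (iii), with index contractions matching the paper's $X_U$, $\Lambda(t)$, and $Y_U(t)$. Your explicit semigroup argument ($\mathcal L \mathcal D_{B'}=0$ by differentiating $\e^{\mathcal L s}\mathcal D_{B'}=\mathcal D_{B'}$) is a slightly more careful justification of the step the paper takes for granted, but it does not change the overall structure.
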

\begin{proof}
  \textbf{(i) \& (ii)} The condition $\lim_{t \to \infty} \exp \left( \mathcal L \, t  \right) = \mathcal D_{B'}$ is equivalent to \begin{inparaenum}[$(a')$]
  \item $ P'_i \in \Ker \left(\mathcal L\right)$ for all $i$ (i.e., all $P'_i$ belong to the set of steady states) and
  \item all matrix elements $\braket{i'  | \exp\left(\mathcal L t \right) \, \rho_0 | j'}$ ($i \ne j$) vanish for $t \to \infty$ for all initial states $\rho_0$.
  \end{inparaenum}
  We will first show that $(a')$ is equivalent to $(a)$ \& $(b)$ holding true. Indeed, from the Lemma it follows that $\mathcal L P'_i = 0$ for all $i$ iff $H$ and $\{L_\alpha\}_\alpha$ are all diagonal with respect to $B$, i.e., conditions $(a)$ and $(b)$ are true. Now we need to make sure that $(b')$ holds, i.e.,  non-diagonal elements decay for $t \to \infty$. By plugging in the form of the Lindbladian Eq.~\eqref{Lindblad} it follows that $\mathcal L (\ket{i'}\bra{j'}) = \lambda_{ij} \ket{i'} \bra{j'}$, with
  \begin{align*}
  \lambda_{ij} = - \im (E_{i} - E_{j}) &+ \sum_\alpha \Big( \left(L_\alpha \right)_{ii}  \left(L_\alpha\right)_{jj}^* \\
  &- 1/2 \left| \left(L_\alpha \right)_{ii}   \right|^2  -1/2 \left| \left(L_\alpha \right)_{jj}   \right|^2\Big) \,\;,
  \end{align*}
  since all operators are diagonal in the $B'$ basis. Thus these elements decay iff $\Real (\lambda_{ij}) < 0$  for $i \ne j$, which is equivalent to  condition $(c)$, since $\Real \left( \lambda_{ij} \right) = -1/2 \sum_\alpha \left| \left( L_\alpha \right)_{ii} - \left( L_\alpha \right)_{jj}\right|^2$. Finally, the channel is unital for all $t \ge 0$ since $ I/d = \sum_i  P'_i/d \in \Ker \mathcal L$ as convex combination of elements of $B'$. \\ \\
  \textbf{(iii)}  Our starting point is Eq.~\eqref{CGP_unitals} (since $\mathcal E(t) = \exp\left( \mathcal L \, t \right)$ is unital $\forall t \ge 0$). We first calculate $\sum_i \braket{\mathcal E(t) P_i , \mathcal E(t) P_i }$. From (ii), $\mathcal E(t) (\cdot) = \sum_{k,l} P'_k (\cdot) P'_l \exp(\lambda_{kl}\, t)$. Using that, it is direct to show that
  \begin{align*}
  \sum_i \braket{\mathcal E(t) P_i , \mathcal E(t) P_i } &= \sum_{i,k,l} \left( X_U \right)_{ik} \exp\left[ (\lambda_{kl} + \lambda_{lk}) t  \right] \left( X_U \right)_{il} \\ &= \Tr\left( X_U \Lambda (t) X_U^T  \right) \,\;.
  \end{align*}
  In a similar way it follows that
  \begin{gather*}
  \sum_i \braket{\mathcal D_B \mathcal E(t) P_i ,\mathcal D_B \mathcal E(t) P_i } = \Tr \left( Y_U (t) \, Y_U^T (t) \right) \,\;.
  \end{gather*}
  Combining the two calculations the claimed result follows.
\end{proof}

Observe that for $t \to \infty$ expression $\eqref{Dephasing_CGP_2norm_time}$ reduces to Eq.~\eqref{dephasing_2norm_CGP}, as expected from the fact that $\lim_{t \to \infty} \exp\left( \mathcal L t \right) = \mathcal D_{B'}$. Indeed, under the conditions stated in Prop.~\autoref{maximal_dephasing_lindblad_prop}, $\Lambda (t \to \infty)  = I$ and $Y (t \to \infty) = X_U X_U^T$.

\subsection{Maximum Coherence Generating Power of dephasing Lindbladians} \label{Lindblad_maximal_section}

A natural question to be asked is whether or not there exist maximally dephasing Lindbladians which, although dephasing in the long time limit, have better capability to produce coherence for finite times than any maximally dephasing channel. As we will show momentarily, such dephasing time evolutions exist with 2-norm CGP that can get arbitrarily close to the maximum possible (over all unital quantum operations) $C_{2,B}$.
%, which is attained only over specific unitary quantum channels.

Consider for simplicity Lindbladian dynamics with Hamiltonian $H = 0$ and a single unitary Lindblad operator $V$, expressed as $V = \e ^{-\im H_V}$.  The evolution equation then takes the simple form $\mathcal L_V \,\rho = V \rho V^\dagger - \rho$. In the case where $H_V$ is non-degenerate, all conditions of prop.~\autoref{maximal_dephasing_lindblad_prop}(i) are met so in the long time limit  maximal dephasing occurs (with respect to the eigenbasis of the operator $H_V$).

Without loss of generality, we assume that all eigenvalues of $H_V$ are non-negative with $\left\| H_V \right\|_\infty < 2\pi$. Now let us examine what happens when $\left\| H_V \right\|_\infty \ll 1$. By expanding, we get $V = I - \im H_V + \mathcal O(\left\|  H_V \right\|_\infty^2)$, therefore
\begin{gather}
  \mathcal L_V \,\rho = - \im \left[ H_V , \rho \right] + \mathcal O\left( \left\|  H_V \right\|_\infty^2 \right) \,\;,
\end{gather}
As a result, for time scales such that $t \,\left\|  H_V \right\|_\infty^2 \ll 1$ the system undergoes ``almost'' unitary evolution under effective Hamiltonian $H_V$ and error $\mathcal O (t \left\|  H_V \right\|_\infty^2)$, which can be made arbitrarily small assuming $\left\|  H_V \right\|_\infty \ll 1$. The occurring effective unitary evolution is the key aspect that allows achieving a large CGP value, while dephasing is still the dominant process for large timescales $t \,\left\|  H_V \right\|_\infty^2 \gg 1$. Let us now make the above remarks precise. For the following, we normalize
\begin{gather}
  \tilde C_{2,B}  \left( \mathcal E \right) \coloneqq \frac{C_{2,B} \left( \mathcal E \right)}{ 4 C_{2,B}^{max}(d)}
\end{gather}
so that $0 \le \tilde C_{2,B} \left( \mathcal E \right) \le 1$ for all unital channels $\mathcal E$.

\begin{figure}[t]
  \centering
  \includegraphics[width= \columnwidth]{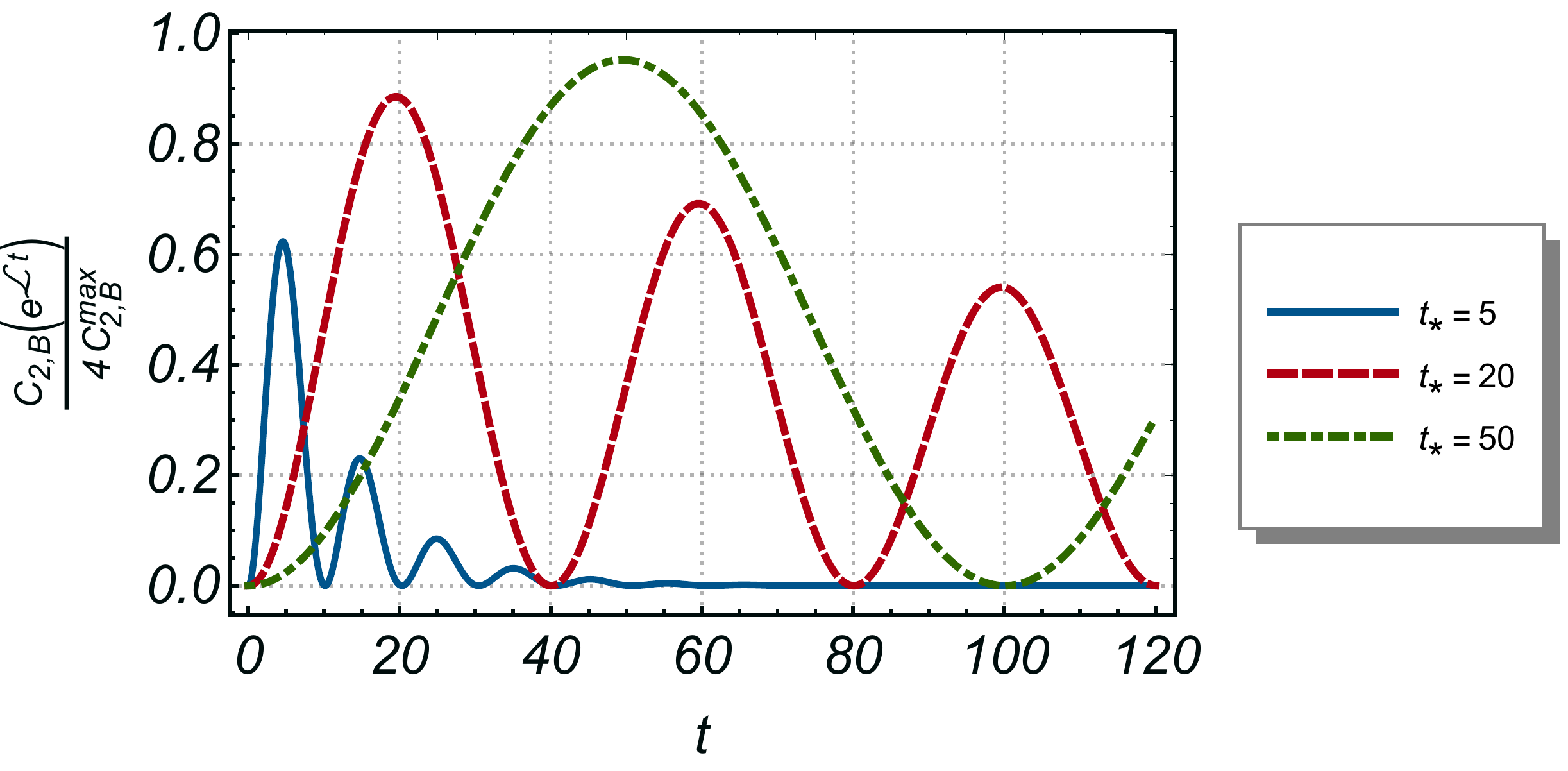}
  \caption{Plot of $C_{2,B}\left( \e ^{\mathcal L_V t} \right)$  for a 2-level system (normalized such that $4 C_{2,B}^{max} = 1$). The Lindbladian is chosen as in part (iii) of Prop.~\autoref{prop_phases}: $B = \{ P_0, P_1 \}$ and $V = P_+ + \e^{- \im \frac{\pi}{2 t_*} } P_-$ (where $P_\pm = \ket{\pm} \bra{\pm}$). Observe how, as $t_*$ increases (or, equivalently, $\left\| H_V \right\|_\infty$ decreases) the peak moves higher up, approaching unity.}\label{lindblad_graph}
\end{figure}

\begin{prop} \label{prop_phases}
  Let $B = \{ P_i \coloneqq \ket{i}\bra{i}\}_{i=1}^d$  be a complete family of \mbox{rank-1} orthogonal projectors and $V  = \e ^{- \im H_V}$ be a non-degenerate unitary with $\mathcal L_V (\cdot) = V (\cdot) V^\dagger - (\cdot)$ denoting the associated (maximally) dephasing Lindbladian and $\mathcal H_V \left( \cdot \right) \coloneqq - \im  \left[ H_V , \left( \cdot \right) \right]$ denoting the relevant Hamiltonian generator. Then,
  \begin{enumerate}[(i)]
    \item The difference of the CGP between the Lindbladian and the Hamiltonian evolution is bounded by
          \begin{align}\label{bound_unitary_lindblad_CGP}
            \left| \tilde C_{2,B} \left( \e ^{\mathcal L_V t} \right) - \tilde C_{2,B} \left( \e ^{\mathcal H_V t} \right) \right| \le 64 \frac{d}{d-1}\left\|  H_V \right\|_\infty^2 t
          \end{align}
    for any $\left\|  H_V \right\|_\infty \le 1/2$ and $t\ge 0$.
    \item Let $W$ denote a non-degenerate unitary connecting $B$ with a mutually unbiased basis, i.e.,
    \begin{gather*}
     \left| \braket{i | W | j} \right| = \frac{1}{\sqrt{d}} \;\, \forall i,j \,\;.
    \end{gather*}
    Then for $V =  W ^{1/t_*}$ and any $t_*\ge 4 \pi$,
        \begin{align}
         \left|  \tilde C_{2,B} \left( \e^{\mathcal L_V t_*} \right) -1 \right| \le 256 \pi^2 \frac{d}{d-1} \frac{1}{t_*} \,\;.
        \end{align}
    \item Let $F$ denote the quantum Fourier transform matrix, i.e.,
    \begin{align}
      \braket{j | F | k} = \frac{1}{\sqrt{d}} \exp \left(\im \frac{2\pi}{d} (j-1) \, (k-1)\right) \,\;.
    \end{align}
    If $H_V(\theta_d) = \sum_{k=1}^d \theta_k P'_{k} \,$, where $P'_{k} = F P_k F^\dagger$, $\theta_k = \theta_d \displaystyle \frac{f_k}{f_d}$ with $f_{k} = (k-1)(k-2)$ ($d$ odd) and $f_{k} = (k-1)^2$ ($d$ even), then
    \begin{align}
      \left| \tilde C_{2,B} \left( \e^{\mathcal L t_*} \right) -  1   \right| \le  64 \pi (d-1) \theta_d \,\;,  \label{maximum_CGP_lindblad_ineq}
    \end{align}
    for $ t_* (\theta_d) =\displaystyle \frac{\pi f_d}{d \theta_d}$ and $\theta_d \le \displaystyle   \frac{1}{2}$.
  \end{enumerate}
\end{prop}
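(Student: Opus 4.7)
The proposition splits into one a priori perturbative estimate (i) and two explicit constructions (ii)--(iii) that feed into it. The underlying idea of (ii) and (iii) is the same: choose the Lindblad operator $V=\e^{-\im H_V}$ so that the purely Hamiltonian part of the dynamics, at time $t_*$, realises a unitary $U=\e^{-\im H_V t_*}$ connecting $B$ with a mutually unbiased basis, since by the remark following Prop.~\autoref{2norm_CGP_prop} such $U$ already saturate $\tilde C_{2,B}(\mathcal{U})=1$; item (i) then bounds the loss incurred by the added incoherent piece of $\mathcal{L}_V$ over $[0,t_*]$.

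To prove (i) I first want to estimate $\|\mathcal{L}_V - \mathcal{H}_V\|_{2\to 2}$. Writing $\mathcal{L}_V\sigma = V\sigma V^\dagger - \sigma$ and applying the identity $\e^{-\im\alpha H_V}X\e^{\im\alpha H_V} - X = -\im\int_0^\alpha \e^{-\im\beta H_V}[H_V,X]\e^{\im\beta H_V}\,d\beta$ twice yields
\begin{equation*}
(\mathcal{L}_V-\mathcal{H}_V)\sigma = -\int_0^1\!\!\int_0^\alpha \e^{-\im\beta H_V}[H_V,[H_V,\sigma]]\e^{\im\beta H_V}\,d\beta\,d\alpha ,
\end{equation*}
which, using unitary invariance of $\|\cdot\|_2$ and the double-commutator bound, gives $\|\mathcal{L}_V - \mathcal{H}_V\|_{2\to 2} = O(\|H_V\|_\infty^2)$. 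Duhamel's formula $\e^{\mathcal{L}_V t} - \e^{\mathcal{H}_V t} = \int_0^t \e^{\mathcal{L}_V(t-s)}(\mathcal{L}_V - \mathcal{H}_V)\e^{\mathcal{H}_V s}\,ds$, combined with the $\|\cdot\|_2$-contractivity of $\e^{\mathcal{L}_V s}$ (unital CPTP maps are Hilbert--Schmidt contractions by Kadison--Schwarz) and the $\|\cdot\|_2$-isometry of $\e^{\mathcal{H}_V s}$, upgrades this to $\|(\e^{\mathcal{L}_V t} - \e^{\mathcal{H}_V t})P_i\|_2 = O(t\|H_V\|_\infty^2)$. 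Finally, substituting into $C_{2,B}(\mathcal{E}) = \frac{1}{d(d+1)}\sum_i\|\mathcal{Q}_B\mathcal{E}P_i\|_2^2$ (Eq.~\eqref{CGP_unitals}) and applying $\bigl|\|a\|_2^2-\|b\|_2^2\bigr|\le \|a-b\|_2\|a+b\|_2$ termwise, with $\|\mathcal{Q}_B\mathcal{E}P_i\|_2 \le \|P_i\|_2 = 1$ for both channels, transfers the bound to $C_{2,B}$; the normalisation $[4C_{2,B}^{max}(d)]^{-1}=d(d+1)/(d-1)$ then produces the prefactor $d/(d-1)$ in $\tilde C_{2,B}$, the hypothesis $\|H_V\|_\infty\le 1/2$ serving only to keep sub-leading Taylor remainders from spoiling the constant $64$.

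Parts (ii) and (iii) are then corollaries. For (ii) I take $H_V = H_W/t_*$ where $H_W$ is any Hermitian logarithm of $W$ with spectrum in $[0,2\pi)$; then $\e^{-\im H_V t_*} = W$, so $\e^{\mathcal{H}_V t_*} = \mathcal{W}$ exactly, the MUB hypothesis on $W$ gives $\tilde C_{2,B}(\mathcal{W}) = 1$, and $\|H_V\|_\infty \le 2\pi/t_*$ inserted into (i) yields the $1/t_*$ decay (the condition $t_* \ge 4\pi$ is precisely what ensures $\|H_V\|_\infty \le 1/2$). For (iii), $H_V=\sum_k\theta_k P'_k$ is diagonal in the Fourier basis by construction, so at $t_* = \pi f_d/(d\theta_d)$ the accumulated phases are $\theta_k t_* = \pi f_k/d$ and it remains to check that
\begin{equation*}
U_{ij} = \langle i|F\,\mathrm{diag}(\e^{-\im\pi f_k/d})\,F^\dagger|j\rangle = \frac{1}{d}\sum_{m=0}^{d-1}\e^{\im 2\pi(i-j)m/d}\,\e^{-\im\pi f_{m+1}/d}
\end{equation*}
has modulus $1/\sqrt{d}$ for every $i,j$. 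This is a quadratic Gauss sum whose modulus equals $\sqrt{d}$; completing the square modulo $d$ requires an inverse of $2$ in $\mathbb{Z}_d$, which is exactly what forces the parity split $f_k=(k-1)(k-2)$ for odd $d$ versus $f_k=(k-1)^2$ for even $d$, in the spirit of the Wootters--Fields MUB construction. Once this is verified, $\tilde C_{2,B}(\e^{\mathcal{H}_V t_*}) = 1$, and substituting $\|H_V\|_\infty = \theta_d$ together with $f_d\le (d-1)^2$ (valid in both parities) into (i) produces exactly $64\pi(d-1)\theta_d$.

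The main obstacle is the Gauss-sum verification in (iii): the rest of the plan is semigroup perturbation theory of a standard kind, while that step is algebraic, parity-sensitive, and the very reason the paper splits $f_k$ by the parity of $d$. Bookkeeping the prefactors through the iterated commutator, the Duhamel contraction estimate, and the $\tilde C_{2,B}$ normalisation so as to land on the advertised constant $64$ rather than something larger is another place that requires care, though a routine one.
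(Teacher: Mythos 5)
Your proposal is correct, and parts (ii)--(iii) follow essentially the paper's own route (realize a mutually-unbiased-basis unitary at $t=t_*$ via the Hamiltonian part, verify $\left|U_{ij}\right|^2=1/d$ through a generalized quadratic Gauss sum with the parity-dependent $f_k$, then invoke part (i) together with $\left\| H_V\right\|_\infty=\theta_d$ and $f_d\le (d-1)^2$); your ``completing the square mod $d$'' explanation of the parity split is the same Gauss-sum fact the paper simply cites. Where you genuinely diverge is part (i). The paper proves it in three diamond-norm steps: a Lipschitz bound $\left| C_{2,B}(\mathcal E_1)-C_{2,B}(\mathcal E_2)\right| \le \frac{8}{d+1}\left\| \mathcal E_1-\mathcal E_2\right\|_\diamondsuit$ obtained with the SWAP-operator identity, the semigroup estimate $\left\| \e^{\mathcal L t}-\e^{\mathcal H t}\right\|_\diamondsuit\le t\left\| \mathcal L-\mathcal H\right\|_\diamondsuit$, and the generator bound $\left\| \mathcal L_V-\mathcal H_V\right\|_\diamondsuit\le 8\left\| H_V\right\|_\infty^2$, the last being the only place $\left\| H_V\right\|_\infty\le 1/2$ enters. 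You instead work entirely in the Hilbert--Schmidt topology: the exact double-commutator integral representation gives $\left\| \mathcal L_V-\mathcal H_V\right\|_{2\to 2}\le 2\left\| H_V\right\|_\infty^2$ with no Taylor remainder at all (so your closing remark about the role of $\left\| H_V\right\|_\infty\le 1/2$ is moot in your scheme -- the hypothesis is simply not needed), Duhamel plus $2$-norm contractivity of the unital CPTP semigroup and isometry of the unitary flow give $\left\| (\e^{\mathcal L_V t}-\e^{\mathcal H_V t})P_i\right\|_2\le 2t\left\| H_V\right\|_\infty^2$, and the termwise transfer through $C_{2,B}(\mathcal E)=\frac{1}{d(d+1)}\sum_i\left\| \mathcal Q_B\mathcal E P_i\right\|_2^2$ with $\bigl|\left\| a\right\|_2^2-\left\| b\right\|_2^2\bigr|\le\left\| a-b\right\|_2\left\| a+b\right\|_2$ lands you at $\frac{4d}{d-1}\left\| H_V\right\|_\infty^2 t$, which is strictly sharper than the advertised $64\frac{d}{d-1}\left\| H_V\right\|_\infty^2 t$ and hence implies it -- the constant-bookkeeping worry you flag at the end is unfounded. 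The trade-off: the paper's diamond-norm Lipschitz lemma is a general-purpose statement about arbitrary pairs of unital channels, reusable elsewhere, while your argument is more elementary and tighter precisely because the $2$-norm CGP is itself a Hilbert--Schmidt quantity, so no detour through the $\diamondsuit$-norm (and its tensor-stability machinery) is required.
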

%\begin{prop} \label{prop_phases}
%Let $B = \{ P_i \}_{i=1}^d$, $B' = \{ P'_i \}_{i=1}^d$ be complete sets of rank-1 orthogonal projectors such that $\ket{i'} = W \ket{i}$ with $\braket{l | W | m} = \frac{1}{\sqrt{d}} \exp \left(\im \frac{2\pi}{d} (l-1) \, (m-1)\right) $ (i.e., $\{ \ket{i}\}_{i=1}^d$ and $\{\ket{i'}\}_{i=1}^d$ are connected via the quantum Fourier transform), $V  = \sum_i \e ^{- \im \theta_i } P'_i \in U(d)$ be a non-degenerate unitary and $\mathcal L_V (\cdot) = V (\cdot) V^\dagger - (\cdot)$ be the associated (dephasing) Lindbladian. Then, ignoring $\mathcal O \left( \theta_i \theta_j \right)$ and higher order terms, the 2-norm CGP of the time propagator attains the maximum possible value over unital quantum channels
%\begin{gather}
%  C_{2,B} \left[ \exp \left( \mathcal L_V \, t_* \right) \right] = \frac{d-1}{d(d+1)} = 4 C_{2,B}^{max}(d)
%\end{gather}
%for $\theta_k (t_*) = \frac{\pi}{t_* d} f_k$ where $f_{k} = (k-1)(k-2)$ for odd dimension $d$ and $f_{k} = (k-1)^2$ for even dimension $d$ ($k = 1, \dotsc , d$).
%\end{prop}
\begin{proof}
   \textbf{(i)} We split the proof into three parts (a) -- (c) which can be combined to show the desired inequality. \\ \\
   \textbf{(a)} For unital CPTP maps $\mathcal E_1, \mathcal E_2$ the following inequality holds:
   \begin{align}\label{claim_1}
     \left| C_{2,B} \left(\mathcal E_1\right) - C_{2,B} \left(\mathcal E_2\right) \right| \le \frac{8}{d+1} \left\| \mathcal E_1 - \mathcal E_2 \right\|_{\diamondsuit} \,\;.
   \end{align}
   To show this, we start from Eq.~\eqref{CGP_unitals}. Using the triangle inequality, we get
   \begin{align*}
     & \big| C_{2,B} \left(\mathcal E_1\right)  - C_{2,B} \left(\mathcal E_2\right) \big|  \le \frac{1}{d (d+1)} \left( T_1 - T_2 \right) \,\;, \text{ where} \\
     & T_1  \coloneqq    \big| \textstyle \sum_{i} \displaystyle \braket{\mathcal E_1 P_i,\mathcal E_1 P_i } - \textstyle \sum_{i} \displaystyle \braket{\mathcal E_2 P_i,\mathcal E_2 P_i }  \big| \\
     & T_2  \coloneqq \big| \textstyle \sum_{i} \displaystyle \braket{\mathcal D_B \mathcal E_1 P_i,\mathcal D_B \mathcal E_1 P_i } -  \textstyle \sum_{i} \displaystyle \braket{\mathcal D_B \mathcal E_2 P_i,\mathcal D_B \mathcal E_2 P_i }  \big|  \,\;.
   \end{align*}
   Denoting $\rho_B \coloneqq 1/d \sum_{i=1}^d P_i \otimes P_i$ and using the identity
  \begin{align}\label{swap_identity}
    \Tr\left( A B \right) &= \Tr \left( P_{(12)} \, A \otimes B \right) \,\;,
  \end{align}
   where $P_{(12)} \coloneqq \sum_{i,j} \ket{ij} \bra{ji}$ is the SWAP operator, we get
   \begin{align*}
     T_1 & = d \left| \Tr \left( P_{(12)} \mathcal E_1 ^{\otimes 2} \rho_B  \right)-  \Tr \left( P_{(12)} \mathcal E_2 ^{\otimes 2} \rho_B  \right) \right|\\
         & = d \left| \Tr \left[ P_{(12)} \left(  \mathcal E_1 ^{\otimes 2} -  \mathcal E_2 ^{\otimes 2}\right) \rho_B \right] \right| \\
         & \le d  \left\| \left(  \mathcal E_1 ^{\otimes 2} - \mathcal E_2 ^{\otimes 2} \right) \rho_B  \right\|_1   \,\;,
   \end{align*}
   where in the third line we used the fact that $ \left| \Tr \left( A B \right) \right| \le \left\| A \right\|_\infty \left\| B \right\|_1$ and that $\left\| P_{(12)} \right\|_\infty = 1$. Now, since $\left\| \rho_B \right\|_1 = 1$, we have
   \begin{align*}
     T_1 \le d \left\|  \mathcal E_1^{\otimes 2}- \mathcal E_2^{\otimes 2}\right\|_{1,1} \le d \left\|  \mathcal E_1^{\otimes 2}- \mathcal E_2^{\otimes 2}\right\|_{\diamondsuit}  \,\;.
   \end{align*}
   Setting $\mathcal M \coloneqq \mathcal E_1 - \mathcal E_2$, we have
   \begin{align*}
     T_1 &\le d \left\| \left(\mathcal M + \mathcal E_2  \right)^{\otimes 2} - \mathcal E_2 ^{\otimes 2}   \right\|_{\diamondsuit}\\
         &\le d \left(  \left\| \mathcal M \otimes \mathcal M  \right\|_{\diamondsuit} + \left\| \mathcal M \otimes \mathcal E_2  \right\|_{\diamondsuit} + \left\| \mathcal E_2 \otimes \mathcal M  \right\|_{\diamondsuit}\right) \\
         &\le d \left( \left\| \mathcal M \right\|_{\diamondsuit}^2 + 2 \left\| \mathcal M \right\|_{\diamondsuit} \right) \le d \left\| \mathcal M  \right\|_{\diamondsuit} \left( \left\| \mathcal M  \right\|_{\diamondsuit} + 2  \right)\\
         & \le 4d \left\|  \mathcal M \right\|_{\diamondsuit} \,\;.
   \end{align*}
   Now we will bound the term $T_2$. The proof proceeds in a similar way. Since $\left\| \mathcal D_B \right\|_{1,1} = 1$, we also get
   \begin{align*}
     T_2 \le 4 d \left\|  \mathcal M \right\|_{\diamondsuit} \,\;.
   \end{align*}
   The desired inequality follows. \\ \\
   \textbf{(b)} We are going to prove that the following inequality holds:
   \begin{align}\label{claim_2}
     \left\| \e^{\mathcal L t} - \e^{\mathcal H t}   \right\|_{\diamondsuit} \le t \left\| \mathcal L - \mathcal H \right\|_{\diamondsuit}
   \end{align}
   for all $t\ge 0$. Denoting $\mathcal U_t \coloneqq \exp \left( \mathcal H t \right)$ we have
   \begin{align*}
     \left\| \e^{\mathcal L t} - \e^{\mathcal H t}   \right\|_{\diamondsuit} = \left\| \mathcal U^\dagger_t \e^{\mathcal L t}   - I   \right\|_{\diamondsuit} = \left\| \mathcal K_t   - I   \right\|_{\diamondsuit}
   \end{align*}
   where $\mathcal K_t \coloneqq \mathcal U^\dagger_t \exp\left(\mathcal L t\right) $. We also have $\dot {\mathcal K}(t) = \mathcal V_t \mathcal K(t)$ with $\mathcal V_t = \mathcal U^\dagger_t \left( \mathcal L - \mathcal H \right) \mathcal U_t$ (interaction picture). Now we can bound the quantity of interest:
   \begin{align*}
     \left\| \mathcal K_t   - I   \right\|_{\diamondsuit} &= \left\| \int_{0}^{t} ds \, \dot {\mathcal K_{s}}   \right\|_{\diamondsuit} = \left\| \int_{0}^{t} ds \, \mathcal V_{s} \mathcal K_{s}   \right\|_{\diamondsuit} \\
      & \le t \sup_{s \in [0,t]}  \left\|  \mathcal V_{s} \mathcal K_{s}   \right\|_{\diamondsuit} \le t \left\| \mathcal L - \mathcal H \right\|_{\diamondsuit} \,\;,
   \end{align*}
   where in the last inequality we used sub-multiplicativity and the unitary invariance of the diamond norm together with the fact that $\left\| \exp{ \left(\mathcal L t \right)} \right\|_{\diamondsuit} = 1$ (CPTP map for all $t \ge 0$).
   \\ \\
   \textbf{(c)}  For $\left\|  H_V \right\|_\infty \le 1/2$, we will show that
   \begin{align}\label{claim_3}
     \left\| \mathcal L_V - \mathcal H_V \right\|_{\diamondsuit} \le 8 \left\|  H_V \right\|_\infty^2 \,\;.
   \end{align}
   Notice that $H_V$ can always be chosen so that all eigenvalues are non-negative and $\left\|  H_V \right\|_\infty < 2\pi$. Now we define the superoperators
   \begin{align}
     \mathscr L_A \left(\rho\right) &= A \rho \\
     \mathcal R_B \left(\rho\right) &=  \rho B \,\;,
   \end{align}
   for which it holds that
   \begin{align*}
     \left\|  \mathscr L_A   \right\|_{\diamondsuit} &=  \left\|  \mathscr L_A  \otimes I_d \right\|_{1,1} =  \left\|  \mathscr L_{A\otimes I_d} \right\|_{1,1} \le \left\|  A \right\|_{\infty} \,\;, \\
     \left\|  \mathcal R_B   \right\|_{\diamondsuit} &=  \left\|  \mathcal R_B  \otimes I_d \right\|_{1,1} =  \left\|  \mathcal R_{B\otimes I_d} \right\|_{1,1} \le \left\|  B \right\|_{\infty}
   \end{align*}
   and therefore
   \begin{align} \label{diamond_ineq}
     \left\|  \mathscr L_A \mathcal R_B \right\|_{\diamondsuit} \le \left\| A \right\|_{\infty} \left\| B \right\|_{\infty} \,\;.
   \end{align}
   Setting $\Delta \coloneqq V -(I - \im H_V)$ we can express the action of $\left(\mathcal L_V - \mathcal H_V \right)$ on some $\rho$ as
   \begin{align*}
     \left( \mathcal L_V - \mathcal H_V \right) \rho = \Delta \rho + \rho \Delta ^\dagger &+ \Delta \rho \Delta^\dagger + H_V \rho H_V \\ &- \im H_V \rho \Delta^\dagger + \im \Delta \rho H_V \,\;,
   \end{align*}
   and therefore
   \begin{align*}
   \left\| \mathcal L_V - \mathcal H_V  \right\|_{\diamondsuit} \le  \left\| \mathscr L_{\Delta} \right\|_{\diamondsuit} + \left\| \mathcal R_{\Delta ^\dagger } \right\|_{\diamondsuit} &+  \left\| \mathscr L_{\Delta} \mathcal R_{ \Delta^\dagger} \right\|_{\diamondsuit}   + \left\| \mathscr L_{ H_V } \mathcal R_{ H_V} \right\|_{\diamondsuit} \\ & + \left\| \mathscr L_{H_V}  \mathcal R_{ \Delta^\dagger} \right\|_{\diamondsuit} + \left\|  \mathscr L_{\Delta} \mathcal R_{ H_V }\right\|_{\diamondsuit} \,\;.
   \end{align*}
   Using Eq.~\eqref{diamond_ineq}, the above reduces to
   \begin{align*}
     \left\| \mathcal L_V - \mathcal H_V \right\|_{\diamondsuit} \le 2 \left\| \Delta \right\|_\infty + \left\| \Delta \right\|^2_\infty + \left\| H_V \right\|_\infty^2  + 2 \left\| \Delta \right\|_\infty \left\| H_V \right\|_\infty
   \end{align*}
   We will estimate $\left\| \Delta \right\|_\infty$. We have
   \begin{align*}
     \left\| \Delta \right\|_\infty \le \sum _{n=2}^\infty \frac{\left\| H_V \right\|_\infty^n}{n!} \le \left\| H_V \right\|_\infty^2 \sum _{n=0}^\infty \frac{\left\| H_V \right\|_\infty^n}{(n+2)!} \,\;.
   \end{align*}
   Now we make the assumption that $\left\| H_V \right\|_\infty \le 1/2$. Under this assumption,
   \begin{align*}
     \left\| \Delta \right\|_\infty  \le \left\| H_V \right\|_\infty^2 \frac{1}{1 - \left\| H_V \right\|_\infty} \le 2 \left\| H_V \right\|_\infty^2 \,\;.
   \end{align*}
   Using this upper bound we get equation \eqref{claim_3}.

   Finally, we can combine together parts (a) -- (c) (and then normalize) to get the desired inequality for part (i) of the proposition. \\ \\
   \textbf{(ii)} We will first show that the unitary evolution has optimal CGP at $t=t_*$, namely $\tilde C_{2,B} \left( \e^{\mathcal H_V t_*} \right) = 1$.
   %For that, we  need to find the corresponding CGP of the unitary operator $U = \exp \left( - \im H_V t \right)$ generated by  $H_V$.
   In \cite{coherence_1} it was shown that the maximum value of $\tilde C_{2,B}$ is $1$ and is attained by unitary channels $\mathcal U \left(  \cdot \right) = U \left( \cdot \right) U^\dagger$ iff
   \begin{gather}
   \left(X_U\right)_{ij} \coloneqq \left| \braket{i | U |j}\right|^2 = \frac{1}{d} \; \forall i,j \,\;. \label{condition_max_CGP}
   \end{gather}
   Here $U(t) = \exp \left( - \im H_V t \right)$. For $t = t_*$, we have $U(t_*) = V^{t_*} = W$ and thus, indeed, the above condition is satisfied.
   
   Now we can apply part $(i)$ of the proposition to get the desired bound. The matrix $W$ is unitary so $\left\| W \right\|_{\infty} \le 2 \pi$, hence an $H_V$ that satisfies the equation $\left(H_V\right)^{t_*} = W$ can be chosen with $\left\| H_V \right\|_\infty t_* \le 2\pi $. As a result, $\left\|  H_V \right\|_\infty \le 1/2$ from part $(i)$ implies $t_*\ge 4 \pi$.
   \\ \\
   \textbf{(iii)} We will first show that $\tilde C_{2,B} \left( \e^{\mathcal H_V t_*} \right) = 1$. As in the previous part, we need to prove that for the unitary operator $U(t_*) = \exp \left( - \im H_V t_* \right)$ Eq.~\eqref{condition_max_CGP} is satisfied. We have
   \begin{align}
   \left(X_U\right)_{ij} & = \left| \sum_{k = 1}^d  \e^{-\im \theta_k t_*}
    F_{ik} F_{jk}^* \right|^2 \\
    & = \frac{1}{d^2} \left|  \sum_{k=1}^{d} \exp\left(\im \frac{2\pi}{d} (k-1) (j-i)\right)  \exp\left(\im \frac{\pi}{d}f_{k}\right) \right|^2 \,\;.
   \end{align}
   Now consider the odd $d$ case. Substituting for $f_k$, we get
   \begin{align*}
     \left(X_U\right)_{ij} & = \left|  \sum_{k=0}^{d-1} \exp\left[\im \frac{\pi}{d} \left( k^2 +k [2 (j-i) - 3] \right) \right] \right|^2  \,\;,
   \end{align*}
   where the sum inside the modulus is a (generalized) quadratic Gauss sum (see, e.g., \cite{berndt1998gauss}) and for $d$ odd and $(j-i)$ integer, evaluates to $\sqrt{d}$ (ignoring some irrelevant phase factor). Therefore, indeed, $\left(X_U\right)_{ij} = 1/d$. The even $d$ case proceeds similarly.

   Now we can use part $(i)$ of the proposition for $t=t_*$, where $\left\| H_V \right\|_\infty = \theta_d$. Substituting for $t_*(\theta_d)$ and using the fact that $f_d \le (d-1)^2$ (for all $d$) we get the desired inequality.
\end{proof}
Prop.~\autoref{prop_phases} provides two different ``recipes'' to construct dephasing Lindbladians such that for $t=t_*$ the 2-norm CGP is nearly maximal and further provides an upper bound for the difference between the occurring CGP and the optimal one at $t=t_*$. An example demonstrating the construction described in part (iii) of the proposition is plotted in \autoref{lindblad_graph}. Notice that this family of Lindbladians can get arbitrarily close to the maximum value of CGP for $t = t_*$ but, nevertheless, has vanishing CGP for $t \to \infty$. This is because $\lim_{t \to \infty} \exp \left(\mathcal L_V \, t\right) = \mathcal D_{B'}$ with $B$ and $B'$ being mutually unbiased bases.

%Finally, let us provide a necessary condition for the eigenbasis $B' = \{ P'_i \}_i$ of the unitary $V = \sum_i \e^{-\im  \theta_i} P'_i$ that, if not satisfied, the time propagator $\exp \left(\mathcal L_V \, t_* \right)$ cannot get arbitrarily close to maximal CGP (as the eigenvalues of $V$ vary). If $W$ is a unitary connecting the primed and unprimed bases, i.e., $\ket{i'} = W \ket{i}$ (not necessarily the quantum fourier transform), then the necessary condition is $\left( X_W X_W^T \right)_{ij} \ge 1/d$ $\forall i,j$ \textcolor{red}{(Actually also sufficient for qubit)}. This follows easily from the necessary condition $\left| \sum_k \e^{\im \theta_k t}  W_{ik} W_{jk}^* \right|^2 = 1/d$ $\forall i,j$ (explained in the proof of prop.~\autoref{prop_phases}) by using the identity $\left| \sum_i z_i \right| \le \sum_i \left| z_i \right|$ ($z_i \in \mathbb C$).

\section{Random dephasing channels} \label{random_section}

Now we investigate the situation where the maximally dephasing channels are \textit{random}. More specifically, in view of Eq.~\eqref{dephasing_2norm_CGP}, the CGP of a maximally dephasing channel can also be treated as a random variable over the unitary group $U(d)$, which we consider equipped with the Haar measure. In other words, the basis over which the quantum system is being dephased can be regarded as random variable as, for example, in the occurrence of an orthogonal measurement of some (\textit{a priori} unknown) non-degenerate observable. For the following, we use the normalization $\tilde C_{2,B}  \left( \mathcal D_{B'} \right) \coloneqq C_{2,B} \left( \mathcal D_{B'} \right) / C_{2,B}^{max}(d)$ so that $0 \le \tilde C_{2,B} \left( \mathcal D_{B'} \right) \le 1$ \footnote{For the results that follow to be meaningful for large dimension $d$, it is necessary that the normalization function $C_{2,B}^{max}(d)$ is not only an upper bound of $C_{2,B}(\mathcal D_{B'})$ (over all maximally dephasing channels), but also a maximum (i.e., is achievable). In the Appendix, we are able to prove that the upper bound is achievable for all Hilbert space dimensions up to $d = 13$.}.

\begin{prop}[CGP of random dephasing] \label{PDF_proposition}
   Let $P_{CGP} (c) \coloneqq \int d\mu_{Haar}(U) \, \delta\left( c - \tilde C_{2,B}\left( \mathcal D_{B'} \right) \right) $ be the probability density function of the  maximal dephasing (2-norm) coherence generating power over Haar distributed $U \in U(d)$. Then
\begin{enumerate}[(i)]
  \item For a qubit ($d=2$) the probability density function is
  \begin{gather}
    P_{CGP} \left( c \right) = \frac{1}{\sqrt{32 \, c (1-c)}} \left(  \sqrt{1+\sqrt{1-c}} + \sqrt{1-\sqrt{1-c}} \right) \,\;.
  \end{gather}

  \item The mean value
  %of the 2-norm coherence generating power
  $\braket{\tilde  C_{2,B} \left( \mathcal D_{B'} \right) }_U \coloneqq \int dc \,c P_{CGP}(c) $ is bounded from above by
  \begin{gather}
     \braket{ \tilde C_{2,B} \left( \mathcal D_{B'} \right) }_U  \le M(d) \,\;,  \label{mean_bound}
  \end{gather}
  where
  \begin{gather}\label{M_expression}
    M(d) \coloneqq \frac{4 d \left[d (d+5)+2\right]}{(d+1)^2 (d+2)(d+3)} \,\;.
  \end{gather}

  \item Using Levy's lemma for Haar distributed unitary matrices, we obtain
  \begin{gather}\label{Levy_dephasing}
          \Prob \Bigg\{ \tilde C_{2,B} \left( \mathcal D_{B'} \right) \ge \frac{1}{d^{1/4}} + M(d) \Bigg\} \le \exp \left( - \frac{\sqrt{d}}{640^2} \right) \,\;.
  \end{gather}
\end{enumerate}
\end{prop}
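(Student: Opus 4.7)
For part (i), I would invoke the qubit formula $C_{2,B}(\mathcal D_{B'}) = \tfrac{1}{24}\sin^2(2\theta)$ derived in Example 1, with $\theta$ the Bloch polar angle of $U\ket{0}$; combined with $C_{2,B}^{max}(2)=1/24$ this yields $\tilde C_{2,B}(\mathcal D_{B'}) = \sin^2(2\theta)$. Under Haar measure on $U(2)$ the associated Bloch vector is uniformly distributed on the sphere, so $\theta \in [0,\pi]$ has density $\tfrac{1}{2}\sin\theta$. The PDF of $c = \sin^2(2\theta)$ then follows by change of variables: the equation $\sin^2(2\theta) = c$ has four roots $\theta_0, \tfrac{\pi}{2}-\theta_0, \tfrac{\pi}{2}+\theta_0, \pi-\theta_0$ in $[0,\pi]$ (with $\theta_0 = \tfrac{1}{2}\arcsin\sqrt{c}$), all sharing the common Jacobian $|dc/d\theta| = 4\sqrt{c(1-c)}$. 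Summing the corresponding $\tfrac{1}{2}\sin\theta$ density contributions, simplified by the half-angle identities $2\sin^2\theta_0 = 1-\sqrt{1-c}$ and $2\cos^2\theta_0 = 1+\sqrt{1-c}$, produces the stated expression.

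For part (ii), I would use the compact representation
\begin{gather*}
\tilde C_{2,B}(\mathcal D_{B'}) = \frac{4}{d-1}\bigl[\Tr(X_U X_U^T) - \Tr((X_U X_U^T)^2)\bigr]
\end{gather*}
obtained by normalising Eq.~\eqref{dephasing_2norm_CGP}. The first Haar moment is straightforward: $\langle \Tr(X_U X_U^T)\rangle_U = \sum_{i,j}\langle|U_{ij}|^4\rangle_U = 2d/(d+1)$ via the standard identity $\langle|U_{ij}|^4\rangle_U = 2/[d(d+1)]$. The heart of the argument is a lower bound on $\langle \Tr((X_U X_U^T)^2)\rangle_U$. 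I would exploit two structural facts: (a) $X_U X_U^T$ is symmetric positive semidefinite and doubly stochastic, so $\mathbf 1/\sqrt d$ is an eigenvector with eigenvalue $1$ and the remaining $d-1$ eigenvalues lie in $[0,1]$; (b) the fourth-order Haar moments of $U(d)$ are accessible via Weingarten calculus. Combining the Cauchy--Schwarz estimate $\Tr((X_U X_U^T)^2) \ge 1 + (\Tr(X_U X_U^T) - 1)^2/(d-1)$ with an evaluation of $\langle(\Tr(X_U X_U^T))^2\rangle_U = \sum_{ij,i'j'}\langle|U_{ij}|^4|U_{i'j'}|^4\rangle_U$, partitioned into the four subcases distinguished by whether $i = i'$ and $j = j'$, each reducing to a Weingarten sum over cosets of the $S_4$-subgroup stabilising the corresponding row/column repetition pattern, delivers an explicit rational function of $d$ that simplifies to $M(d)$.

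For part (iii), I would apply Levy's concentration inequality for Haar measure on $U(d)$ to the map $U \mapsto \tilde C_{2,B}(\mathcal D_{UBU^\dagger})$. Starting from the representation $\tilde C \propto \|[\mathcal D_{UBU^\dagger},\mathcal D_B]\|_2^2$ in Eq.~\eqref{dephasing_commutator_CGP}, I would bound its Lipschitz constant with respect to the Frobenius metric by differentiating the commutator in $U$, exploiting $\|\mathcal D_B\|_\infty \le 1$ and the isometry of conjugation by $U$, together with the $1/(d-1)$ prefactor; this produces a Lipschitz constant that is independent of $d$. Levy's lemma on $U(d)$ then yields $\Pr\{|\tilde C - \langle \tilde C\rangle_U| \ge \epsilon\} \le \exp(-c\,d\,\epsilon^2)$ for a universal constant $c$ set by the Lipschitz bound. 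Choosing $\epsilon = d^{-1/4}$ makes $d\epsilon^2 = \sqrt{d}$, and combining the one-sided tail with the mean bound $\langle \tilde C\rangle_U \le M(d)$ from (ii) produces the stated estimate, with the explicit denominator $640^2$ absorbing $1/c$.

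The principal obstacle is the Weingarten bookkeeping in (ii): carefully enumerating the four index cases, identifying the permutation subgroups stabilising each row/column repetition pattern, and collecting the $S_4$ Weingarten values at each cycle class before algebraically reducing the result to the succinct form $M(d)$. A secondary subtlety in (iii) is tracking the Lipschitz constant sharply enough to match the specific denominator $640^2$ in the concentration exponent.
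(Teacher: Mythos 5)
Parts (i) and (iii) follow the paper's own route. For (i) your change of variables on $c=\sin^2(2\theta)$ with the uniform Bloch-sphere measure is exactly the paper's argument. For (iii) the strategy (Levy's lemma with a Lipschitz estimate for $U\mapsto \tilde C_{2,B}(\mathcal D_{B'})$, $\epsilon=d^{-1/4}$, one-sided tail combined with the mean bound of (ii)) is the same as the paper's; the paper obtains the Lipschitz constant not by differentiation but by two swap-trick/H\"older estimates giving $|f(U)-f(V)|\le 160\tfrac{d}{d-1}\|U-V\|_2$, hence $K=320$ and the exponent $d\epsilon^2/(4K^2)=\sqrt d/640^2$. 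Your sketch leaves the constant untracked, which you acknowledge; you would need your derivative bound to come out at most $320$ to reproduce the stated denominator.

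Part (ii) contains a genuine gap. Your plan is: lower-bound $\Tr[(X_UX_U^T)^2]\ge 1+(\Tr(X_UX_U^T)-1)^2/(d-1)$ pointwise, then compute $\langle(\Tr X_UX_U^T)^2\rangle_U$ exactly by $(4,4)$ Weingarten calculus, and you assert the result ``simplifies to $M(d)$.'' It does not. $M(d)$ is the output of a different decomposition: the paper writes $\Tr[(X_UX_U^T)^2]=\sum_{k,l}\Tr\bigl(\ket{iijj}\bra{iijj}(\ket{k_Ul_U}\bra{k_Ul_U})^{\otimes2}\bigr)$ summed over $i,j$, evaluates the $k=l$ part exactly with the $n=4$ symmetric-subspace average $\langle(\ket{k_U}\bra{k_U})^{\otimes4}\rangle_U$, and lower-bounds the $k\ne l$ part via $\langle A^2\rangle\ge\langle A\rangle^2$ together with the two-fold twirl formula; no full Weingarten calculus is needed, and the two pieces assemble precisely into Eq.~\eqref{M_expression}. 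Your route lands on a different function of $d$: already at $d=2$ your Cauchy--Schwarz step is an equality (there is only one nontrivial eigenvalue), so your bound equals the exact mean $\langle\tilde C_{2,B}\rangle_U=8/15$, whereas $M(2)=32/45$. Hence the claimed identification with $M(d)$ is false, and since the Weingarten bookkeeping is left undone (your ``principal obstacle''), part (ii) is not actually established. Even if you completed that computation, you would obtain a bound $M'(d)\neq M(d)$ and would still owe a separate argument that $M'(d)\le M(d)$ for all $d$ in order to prove the stated inequality (this is not automatic: discarding the variance term, i.e.\ using only $\langle(\Tr X_UX_U^T-1)^2\rangle\ge(\langle\Tr X_UX_U^T\rangle-1)^2$, yields a bound strictly larger than $M(d)$). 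The cleanest fix is to adopt the paper's $k=l$ / $k\ne l$ split, which produces Eq.~\eqref{M_expression} directly.
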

\begin{proof}
  \textbf{(i)} In Ex.~\autoref{qubit_example_1} the 2-norm CGP for a qubit (in the Bloch sphere parametrization) was found to be $\tilde C_{2,B} (\theta) = \sin^2\left( 2 \theta \right)$. In this parametrization, Haar distributed unitary matrices $U \in U(2)$ correspond to the  measure $ 1 / (4\pi) \int_0^{2\pi} d\phi \int_0 ^\pi d\theta \sin \theta $. Therefore $P_{CGP}(c) = 1 / (4\pi) \int_0^{2\pi} d\phi \int_0 ^\pi d\theta \sin \theta \, \delta \left( c -  \sin^2\left( 2 \theta \right) \right)$. The result follows directly by performing the integral (e.g., by changing variables). \\ \\
  \textbf{(ii)} We have to average Eq.~\eqref{dephasing_2norm_CGP} over $U$. By linearity, we can calculate $\Braket{\Tr \left( X_U X_U^T \right)}_U$ and $\Braket{\Tr \left[ (X_U X_U^T)^2 \right]}_U$ separately.

  The first quantity being averaged is equal to $\Tr \left( X_U X_U^T \right) = \sum_{i,j} \left| \braket{i | U |j} \right| ^4$. By setting $\ket{j_U} \coloneqq U \ket{j}$, we have $\left| \braket{i | j_U} \right|^4 = \Tr \left( \left(\ket{i} \bra{i}\right)^{\otimes 2}  \left(\ket{j_U }\bra{j_U }\right)^{\otimes 2}  \right)$. Again by linearity, it follows that $\Braket{\sum_{i,j} \Tr \left( \left(\ket{i} \bra{i}\right)^{\otimes 2}  \left(\ket{j_U }\bra{j_U }\right)^{\otimes 2}  \right)}_U = \sum_{i,j} \Tr \left(  \left(\ket{i} \bra{i}\right)^{\otimes 2}  \Braket{\left(\ket{j_U }\bra{j_U }\right)^{\otimes 2}}_U \right)$. Now we can employ the well-known general result (for a proof see, e.g., \cite{goodman2000representations})
  \begin{gather}
  \Braket{\left(\ket{j_U}\bra{j_U} \right)^{\otimes n}}_U = \frac{1}{n!} \frac{1}{{d+n-1\choose n}}  \sum_{\pi \in S_n} P_\pi\,\;,
  \end{gather}
  where $S_n$ is the symmetric group of $n$-objects and $P_\pi$ is the operator that enacts the permutation $\pi$ in $\mathcal H^{\otimes n}$. For $n=2$, we have
  \begin{gather}
  \Braket{\left(\ket{j_U}\bra{j_U} \right)^{\otimes 2}}_U = \left[d(d+1)\right]^{-1}\left(I + P_{(12)}\right)\,\;,
  \end{gather}
  where $P_{(12)}$ is the $(12)$ cycle (i.e., $P_{(12)}$ is just the SWAP operator). Plugging this in and performing the trace, we get $\Braket{\Tr \left( X_U X_U^T \right)}_U = 2d/(d+1)$.

\begin{figure}[t]
  \centering
  \includegraphics[width=\columnwidth]{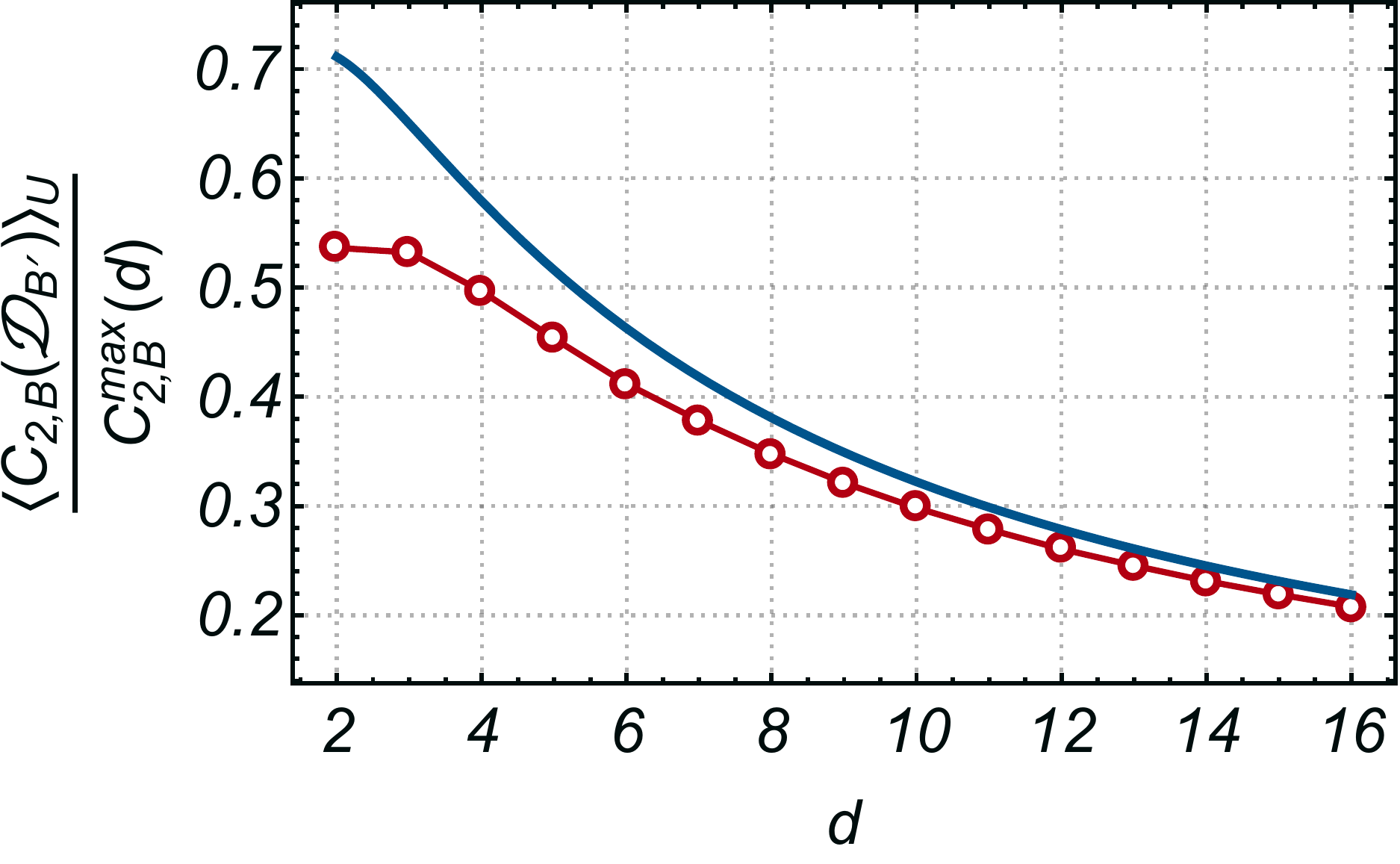}
  \caption{Comparison between the numerically computed dephasing CGP mean $\braket{ C_{2,B} \left( \mathcal D_{B'} \right) }_U$ (individual points) and its upper bound from Eq.~\eqref{mean_bound} (solid line), as a function of the Hilbert space dimension $d$. Both quantities are normalized by dividing with the upper bound $C_{2,B}^{max}(d)$.
  %The numerically computed mean for each $d$ is a numerical average over 100,000 individual realizations.
  }\label{upper_bound}
\end{figure}

  We will follow a similar strategy for the second quantity $\Braket{\Tr \left[ (X_U X_U^T)^2 \right]}_U$. The quantity being averaged can be reexpressed as
  \begin{align*}
  \Tr \left[ (X_U X_U^T)^2 \right] &= \sum_{i,j,k,l} \left| \braket{i | k_U} \right|^2\left| \braket{j | k_U} \right|^2\left| \braket{i | l_U} \right|^2\left| \braket{j | l_U} \right|^2 \\
  &= \sum_{i,j,k,l} \Tr \left( \ket{iijj} \bra{iijj}  \left( \ket{k_U l_U} \bra{k_U l_U}\right)^{\otimes 2} \right) \,\;.
  \end{align*}
  Now we split the sum to two parts: $k=l$ and $k \ne l$, which we call $\Sigma_1$ and $\Sigma_2$, respectively. For $\Sigma_1$ we get $\sum_{i,j,k} \Tr \left( \ket{iijj} \bra{iijj}  \left( \ket{k_U} \bra{k_U}\right)^{\otimes 4} \right)$. Taking the average we can use the formula as before with $n=4$. Therefore we now have to evaluate $\sum_{i,j,\pi}  \Tr \left( \ket{iijj} \bra{iijj} P_\pi  \right)$ for all permutations $\pi \in S_4$. Out of the $4!=24$ elements, after performing the $i,j$ sum, $4$ of them give $d^2$ [the permutations with cycle decomposition $(12)$, $(34)$, $(12)(34)$ and the identity permutation] while the rest give $d$. As a result, $\Sigma_1 = (4d+20)/[(d+1)(d+2)(d+3)]$.

  So far everything is exact. For the $\Sigma_2$ term we notice we can write $\braket{\Sigma_2}_U = \sum_{k \ne l} \Braket{  \left(\Tr  \Bigl[  \left( \sum_i \ket{ii} \bra{ii} \right)\ket{k_U l_U} \bra{k_U l_U} \Bigr] \right)^2 }_U$. We know approximate the mean $\Braket{\Sigma_2}_U$ using the inequality $\braket{A^2} \ge \braket{A}^2$, which yields $\braket{\Sigma_2}_U \ge \sum_{k \ne l} \left( \Tr \left[  \left(\sum_i \ket{ii} \bra{ii} \right) \braket{\ket{k_U l_U} \bra{k_U l_U}}_U \right] \right)^2$. Now we cannot use the formula as before to calculate the average, since $\ket{k_U}$ and $\ket{l_U}$ are correlated. Nevertheless, we can use the slightly more general result (see, e.g., \cite{zanardi2014local})
  \begin{align}
  \braket{U^{\otimes 2} A U^{\dagger \otimes 2} }_U = &\left(\frac{ \Tr A}{d^2-1} - \frac{\Tr \left[P_{(12)}  A\right] }{d(d^2-1)} \right) I  \nonumber \\
  - &\left(\frac{ \Tr A}{d(d^2-1)} - \frac{\Tr \left[P_{(12)}  A\right] }{d^2-1} \right) P_{(12)}\,\;.
  \end{align}
  Evaluating for $A = \ket{k l} \bra{k l}$ (with $k \ne l$) we get $\Sigma_2 \ge d(d-1)/(d+1)^2$.

  Putting everything together we get
  \begin{align}
  M(d) = \frac{4d}{(d-1)(d+1)} \left( \frac{d+3}{d+1} - \frac{4d + 20}{(d+2)(d+3)} \right) \,\;,
  \end{align}
  which simplifies to the expression claimed. \\ \\
  \textbf{iii)} In order to prove the desired inequality we are going to use the following form of Levy's lemma for Haar distributed $U \in U(d)$ (see, e.g., \cite{anderson2010introduction}):
  \begin{gather}\label{Levy_general}
    \Prob \Big\{ \left| f(U) - \braket{f(U)}_U \right| \ge \epsilon \Big\} \le \exp \left( - \frac{d \epsilon^2}{4 K^2} \right) \,\;,
  \end{gather}
  where $K$ is a Lipschitz constant of $f: U(d) \to \mathbb R$, i.e., $\left| f(U) - f(V) \right| \le K \left\| U - V \right\|_2$. Here our function $f(U)$ is going to be $\tilde C_{2,B}\left( \mathcal D_{B'} \right)$ (viewed as a function of the unitary $U$ connecting $B$ and $B'$), i.e.,
  \begin{gather*}
    f(U) \coloneqq \frac{4}{d-1} \left( \Tr \left( X_U X_U^T \right) - \Tr \left[ \left( X_U X_U^T \right)^2 \right]  \right) \,\;.
  \end{gather*}
  Although the exact expression for the mean value $\braket{f(U)}_U$ has not been calculated, the upper bound from Eq.~\eqref{M_expression} allows approximating the desired probability, since
  \begin{align*}
    &\Prob \Big\{ \left| f(U) - \braket{f(U)}_U \right| \ge \epsilon \Big\}  \ge \\
    &\Prob \Big\{  f(U)  \ge \epsilon + \braket{f(U)}_U  \Big\}  \ge\\
    &\Prob \Big\{ f(U)  \ge \epsilon + M(d)  \Big\} \,\;.
  \end{align*}
  To complete the proof we need to estimate a Lipschitz constant for the function $f(U)$. We have
  \begin{align*}
   \left| f(U) - f(V) \right|   \le \frac{4}{d-1} \left( T_1 + T_2 \right) \,\;,
  \end{align*}
  where we set $T_1 \coloneqq  \left| \Tr\left( X_U X_U^T \right) - \Tr\left( X_V X_V^T \right) \right|$ and $T_2 \coloneqq  \left| \Tr\left[ \left(X_U X_U^T \right)^2 \right] - \Tr\left[ \left( X_V X_V^T\right)^2  \right] \right|$. From the proof of Prop.~\autoref{2norm_CGP_prop}, we can equivalently write
  \begin{align*}
      T_1 &= \left| \textstyle \sum_i \displaystyle \braket{\mathcal D_{B'(U)} P_i , \mathcal D_{B'(U)} P_i}  -  \textstyle \sum_j \displaystyle \braket{\mathcal D_{B'(V)} P_j , \mathcal D_{B'(V)} P_j}  \right| \\
          &= \left|  \textstyle \sum_i \displaystyle \braket{\mathcal D_{B} \mathcal U^\dagger P_i , \mathcal D_{B} \mathcal U^\dagger P_i}  -  \textstyle \sum_j \displaystyle \braket{\mathcal D_{B}\mathcal V^\dagger P_j , \mathcal D_{B} \mathcal V^\dagger P_j}  \right| \\
          & = \left|  \textstyle \sum_{i,j} \displaystyle \Tr \left( P_{(12)}  \mathcal D_B ^{\otimes 2} \left[ \mathcal U^{\dagger \otimes 2} P_i^{\otimes 2} - \mathcal V^{\dagger \otimes 2} P_j^{\otimes 2}  \right] \right)  \right| \,\;,
  \end{align*}
  where in the last step we used Eq.\eqref{swap_identity}. An upper bound for this quantity was calculated in \cite{coherence_1}, namely
  \begin{gather*}
    T_1 \le 8 d \left\| U -V  \right\|_2 \,\;.
  \end{gather*}
  Before proceeding with calculation of an upper bound for $T_2$, let us first prove the following inequality which will be needed momentarily:
  \begin{align}\label{norm_lemma}
    \left\|  U \rho U^\dagger - V \rho V^\dagger \right\|_1 \le 4 \left\| U - V  \right\|_\infty \,\;,
  \end{align}
  where $U,V$ are unitary and $\rho$ is an operator with $\left\|\rho  \right\|_1 = 1$. Setting $\Delta \coloneqq U - V$ we have $\left\|  U \rho U^\dagger - V \rho V^\dagger \right\|_1  = \left\| \Delta \rho \Delta^\dagger + \Delta \rho V^\dagger + V \rho \Delta ^\dagger \right\|_1 \le \left\| \Delta \rho \Delta^\dagger \right\|_1 + \left\| \Delta \rho V^\dagger \right\|_1  + \left\| V \rho \Delta ^\dagger \right\|_1$. Using the facts that the norm is unitarily invariant, $\| \Delta \|_\infty \le 2$ and that $\left\| A B  \right\|_1 \le \left\| A \right\|_1  \left\| B \right\|_\infty$ the aforementioned inequality follows.

  We can express $T_2$, in the same spirit as before, as
  \begin{align*}
    T_2 &= \Big| \textstyle \sum_i \displaystyle  \braket{ \mathcal D_{B}\mathcal D_{B'(U)} P_i , \mathcal D_{B} \mathcal D_{B'(U)} P_i} \\
     &\qquad \qquad \qquad -  \textstyle \sum_j \displaystyle \braket{\mathcal D_{B}\mathcal D_{B'(V)} P_j , \mathcal D_{B} \mathcal D_{B'(V)} P_j}  \Big| \\
    & = \Big|   \Tr \Big( P_{(12)} \mathcal D_B ^{\otimes 2} \Big[ \left( \mathcal U \mathcal D_B  \mathcal U^{\dagger} \right)  ^{\otimes 2} \left( \textstyle \sum_{i} \displaystyle P_i^{\otimes 2} \right) \\
     &\qquad \qquad \qquad  - \left( \mathcal V \mathcal D_B  \mathcal V^{\dagger} \right)  ^{\otimes 2}\left( \textstyle \sum_{j} \displaystyle P_j^{\otimes 2} \right)  \Big] \Big)  \Big| \\
    & = d \left| \Tr \left( P_{(12)} \mathcal D_B ^{\otimes 2} \left[ \left( \mathcal U \mathcal D_B  \mathcal U^{\dagger} \right)  ^{\otimes 2} \rho_B - \left( \mathcal V \mathcal D_B  \mathcal V^{\dagger} \right)  ^{\otimes 2} \rho_B  \right] \right)  \right| \,\;,
  \end{align*}
  where in the last step we set $\rho_B \coloneqq 1/d \sum_i P_i^{\otimes 2}$ (notice $\left\| \rho_B \right\|_1 = 1$). Now using the inequality $\Tr\left( A B \right) \le \left\| A \right\|_1 \left\| B \right\|_\infty $ (here $\left\| P_{(12)}  \right\|_\infty = 1$) and the fact that $\mathcal D_B$ is a CPTP map, we get
  \begin{align*}
    T_2 &\le d \left\|   \left( \mathcal U \mathcal D_B  \mathcal U^{\dagger} \right)  ^{\otimes 2} \rho_B  -  \left( \mathcal V \mathcal D_B  \mathcal V^{\dagger} \right)  ^{\otimes 2} \rho_B  \right\|_1 \\
    & = d \left\| \mathcal U^{\otimes 2} \rho_1 -  \mathcal V^{\otimes 2} \rho_2   \right\|_1 \\
    & = d \left\| \mathcal U^{\otimes 2} \rho_1 -  \mathcal V^{\otimes 2} \rho_1 +  \mathcal V^{\otimes 2} (\rho_1 - \rho_2)  \right\|_1  \\
    & \le d \left( \left\| \mathcal U^{\otimes 2} \rho_1 -  \mathcal V^{\otimes 2} \rho_1 \right\|_1 + \left\| \mathcal V^{\otimes 2} (\rho_1 - \rho_2)  \right\|_1   \right) \,\;,
  \end{align*}
  where we set $\rho_1 \coloneqq \left( \mathcal D_B \mathcal U^\dagger \right) ^{\otimes 2}  \rho_B$ and $\rho_2 \coloneqq \left( \mathcal D_B \mathcal V^\dagger \right) ^{\otimes 2}  \rho_B$. Now the inequality from Eq.~\eqref{norm_lemma} applies to both terms, yielding
  \begin{align*}
    T_2 &\le 8 d \left\| U^{\otimes 2} - V^{\otimes 2}  \right\|_\infty \\
        & = 8 d \left\| \left( V^\dagger \Delta + I \right)^{\otimes 2} - I \right\|_{\infty} \\
        &\le 8d \left( \left\| V^\dagger \Delta \right\|_\infty ^2 + 2 \left\| V^\dagger \Delta \right\|_\infty \right) \\
        & \le 8d \left\| \Delta \right\|_\infty \left( 2 + \left\| \Delta \right\|_\infty \right) \\
        & \le 32d \left\| U - V \right\|_\infty \le 32d \left\| U - V \right\|_2 \,\;.
  \end{align*}
  Finally, we obtain
  \begin{align*}
    \left| f(U) - f(V) \right| \le 160 \frac{d}{d-1} \left\| U - V \right\|_2 \,\;,
  \end{align*}
  therefore the Lipschitz constant can be taken to be  $K = 320$. The parameter $\epsilon$, in order to give a meaningful result for large Hilbert space dimension $d$, can be taken to be $\epsilon = d^{-\alpha}$, with $\alpha \in (0,1/2)$. Here we choose $\alpha = 1/4$. The inequality follows.
\end{proof}

\begin{figure}
  \centering
  \includegraphics[width=\columnwidth]{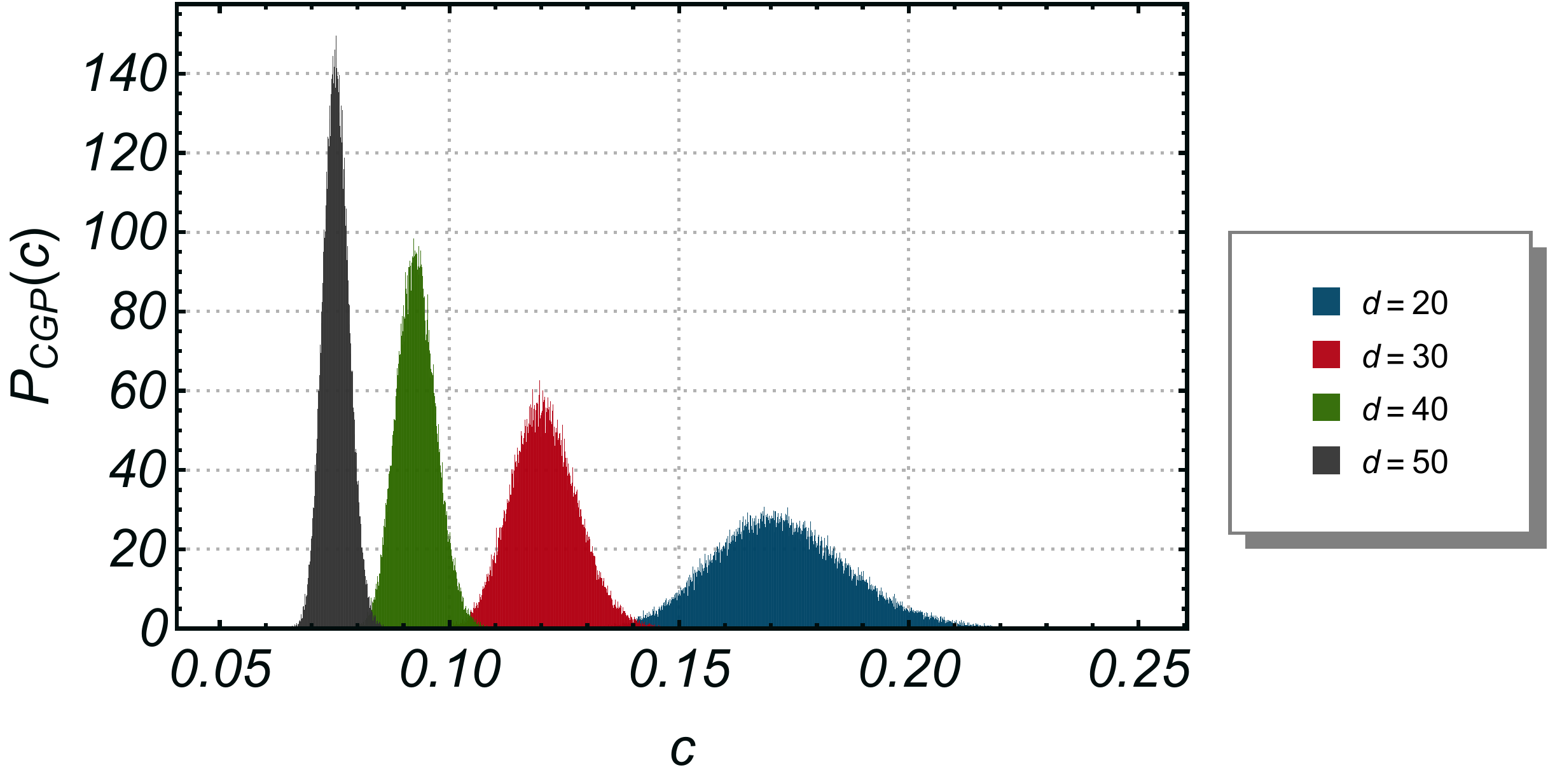}
  \caption{Numerically computed probability distribution functions $P_{CGP}(c)$ for the CGP of random maximally dephasing processes for different Hilbert space dimensions $d$. Last part of Prop.~\autoref{PDF_proposition} guarantees that, for sufficiently large $d$, the probability distribution function is concentrated around the mean value, which decreases as $d$ gets larger, as is indeed observed. Notice that in practice the concentration occurs for smaller $d$ that what is guaranteed by the proposition.}
  \label{histogram}
\end{figure}

Proposition \autoref{PDF_proposition} demonstrates that, in a quantum system described by a large Hilbert space, a maximally dephasing process over a random basis has vanishingly small capability to produce coherence out of incoherent states. Part (ii) of the above proposition sets an upper bound the (properly normalized) CGP, establishing that for large Hilbert space dimension $d$, the quantity $\braket{ \tilde C_{2,B} \left( \mathcal D_{B'} \right) }_U$ drops at least as fast as  $\sim 1/d$. A graphical comparison between the upper bound from Eq.~\eqref{mean_bound} and the numerically computed mean is presented in \autoref{upper_bound}. The last part of Prop.~\autoref{PDF_proposition} shows that, as the Hilbert space size grows, a maximally dephasing process occurring over a random basis has (with an exponentially increasing probability) CGP which is tightly distributed around the (decreasing) mean value. This concentration of the probability distribution function around the mean is depicted in \autoref{histogram}.

\section{Conclusion \& outlook} \label{conclusions_section}

In this work we have investigated the ability of various dephasing processes to generate coherence. For this purpose, we adopted various measures for the \textit{coherence generating power} of quantum channels, all based on probabilistic averages and arising from the viewpoint of coherence as a resource theory. We provided explicit formulas for maximally dephasing processes, valid for all finite Hilbert space dimensions, measuring how much coherence in generated on average from incoherent states when the Hilbert-Schmidt 2-norm and the relative entropy of coherence are used as quantifiers. In all cases, the coherence generating power of the dephasing process depends on the interplay between the bases over which coherence is quantified and dephasing occurs. This capability clearly vanishes when the two bases coincide while the maximum capability occurs for a basis which depends on the measure of state coherence. If the basis over which dephasing occurs is chosen at random in a uniform way, the average coherence generating power drops fast as the Hilbert space dimension increases.

We then extended the analysis to all Lindblad type quantum evolutions that maximally dephase in the infinite time limit by calculating the relevant Hilbert-Schmidt 2-norm coherence generating power of the associated time evolution for all intermediate times. Although maximally dephasing processes can generate finite amounts of coherence (depending on the associated bases), coherence generation cannot be as powerful as for some unitary processes. This is not always the case, however, for Lindblad evolutions that lead to dephasing. For the latter, we identified families of time propagators that have vanishing coherence generating power in the long time limit but nevertheless can get arbitrarily close to having optimal one for intermediate times.

The \textit{coherence generating power} of a quantum operation admits, directly by its definition, an interpretation as the average coherence contained in the post-processed states, as quantified by the relevant coherence measure. Nevertheless, an \textit{operational interpretation} of the \textit{coherence generating power}, relevant to practical tasks for which coherence is known to be a critical ingredient (such as those mentioned in the \autoref{intro_section}), is missing and could represent a challenge for future investigation.

%Although directly measurable (i.e., without requiring generation of a uniform ensemble of incoherent input states, see \cite{coherence_1, coherence_2}),

\acknowledgements

G.S. thanks M.~Tomka and J.~Marshall for helpful discussions, and R.~Di Felice and the CNR-NANO Institute in Modena, Italy for their kind hospitality. This work was partially supported by the ARO MURI grant W911NF-11-1-0268.

\bibliography{coherence}

\appendix

\section{Attaintment of the upper bound $C_{2,B}^{max}(d)$} \label{appendix_maximum}

In this section we examine if the upper bound
\begin{align}
C_{2,B}^{max}(d) \coloneqq \frac{d-1}{4 d (d+1)}
\end{align}
of the maximally dephasing 2-norm CGP $C_{2,B}\left( \mathcal D_{B'} \right)$ (Eq.~\eqref{dephasing_2norm_CGP} of the main text) is attained over some basis $B'$. For example, for a qubit it can be explicitly verified (as in Ex.\autoref{qubit_example_1}) that the upper bound $C_{2,B}^{max}(d)$ is achievable. Here we tackle the general $d$-dimensional case.

From the proof of Prop.~\autoref{2norm_CGP_prop} it follows that the maximum value $C_{2,B}^{max}(d)$ for some (fixed) $d$ is attained if and only if there exists a unitary matrix $U$ such that $\sigma\left( X_U X_U^T \right) = \left\{ 1 , 1/2 \right\}$ with $1$ being a simple eigenvalue ($\sigma \left( A \right)$ denotes the spectrum of the operator $A$). Such a $d$-dimensional matrix is not guaranteed to exist a priori, since the $d$-dimensional unistochastic matrices are a proper subset of the $d$-dimensional bistochastic matrices for $d \ge 3$ (see, e.g., \cite{bengtsson2005birkhoff}).

For what follows, we further restrict to those bistochastic matrices $X_U$ such that \begin{inparaenum}[(a)]
\item are is symmetric and
\item have spectrum $\sigma(X_U) = \left\{ 1, 1/\sqrt{2} \right\}$.
\end{inparaenum}
Such a matrix has the form
\begin{gather*}
X_U = \ket{\psi_1}\bra{\psi_1} + \frac{1}{\sqrt{2}} \sum_{i=2}^d \ket{\psi_i}\bra{\psi_i} = \left( 1 - \frac{1}{\sqrt{2}} \right)\ket{\psi_1}\bra{\psi_1} + \frac{1}{\sqrt{2}} I \,,
\end{gather*}
where $\left\{ \ket{\psi_i} \right\}_{i=1}^d$ is the eigenbasis of $X_U$. However, $\left(X_U\right)_{ij}$ should also be a bistochastic matrix when expressed in the $B = \left\{ \ket{i} \right\}_{i=1}^d$ basis. This fixes the components to
\begin{gather}\label{unistochastic_components}
  \left( X_U\right)_{ij} = \frac{1}{\sqrt{2}} \delta_{ij} + \frac{1}{d} \left( 1 - \frac{1}{\sqrt{2}} \right) \,\;.
\end{gather}

The above $X_U$ matrix is circulant and therefore diagonalizable by the discrete Fourier transform \cite{smith2015unistochastic} $W_{lm} = \frac{1}{\sqrt{d}} \exp \left(\im \frac{2\pi}{d} (l-1) \, (m-1)\right) $. Now we further restrict to circulant $U$
%(such that $X_U$ satisfies the requirements mentioned before)
which is hence also diagonalized by $W$. If such a unitary $U$ exists, is given by $U = W D W^\dagger$, where $D\coloneqq \diag \left( e^{\im \alpha_0}, \dotsc , \e^{\im \alpha_{d-1}} \right)$. As a result, Eq.~\eqref{unistochastic_components}, after some calculations, reduces to the following $(d-1)$ equations involving the eigenvalues of $U$:
\begin{gather}\label{phases_equation}
  \sum_{m=0}^{d-1} \exp\left[ \im \left( \alpha_{m+r} - \alpha_m\right) \right] = \frac{d}{\sqrt{2}} \,\;, \quad r=1,\dotsc, d-1 \,\;,
\end{gather}
where the index addition is understood $ Mod(d)$. The above set of equations for the eigenvalues of $U$ constitutes a sufficient condition for the attaintment of $C_{2,B}^{max}(d)$.

A family of solution to the above equations, valid for $d=2,\dotsc, 13$, is given by $\alpha_m = \phi_0$ for $m=0,\dotsc,k-1,k+1,\dotsc,d-1$ with $\phi_0 \in [0,2\pi)$ and $\alpha_k = \phi_0 + \phi$, where $\cos\phi =  \frac{1}{2} \left( \frac{d}{\sqrt{2}} -d +2 \right) $. The restriction $d \le 13 $ comes from $\cos \phi  \ge -1$.

\end{document}